\newtheorem{theorem}{Theorem}
\newtheorem{lemma}{Lemma}
\newtheorem{proposition}{Proposition}
\newtheorem{corollary}{Corollary}
\newtheorem{remark}{Remark}
\newtheorem*{hypothesis}{Hypothesis}
\newcommand{\ignore}[1]{}
\newcommand{\norm}[1]{\ensuremath{\left\|#1\right\|}}
\newcommand{\Id}{\mathbf{I}}
\newcommand\cc{\ensuremath{\mathbb{C}}}
\title{Hidden Cliques
and the Certification of the Restricted Isometry Property}
\author{%
Pascal Koiran\thanks{LIP, UMR 5668, ENS de Lyon -- CNRS -- UCBL -- INRIA, 
\'Ecole Normale Sup\'erieure de Lyon, Universit\'e de Lyon. 
A part of this work was done when visiting the Department of Computer
Science at the University of Toronto. Email: {\tt Pascal.Koiran@ens-lyon.fr}.} 
\and 
Anastasios Zouzias\thanks{Department of Computer Science, 
University of Toronto. Email: {\tt zouzias@cs.toronto.edu}.}
}
\begin{document}

\maketitle

\begin{abstract}
Compressed sensing is a technique for finding sparse solutions to underdetermined linear systems. This technique relies on properties of the sensing matrix such as the \emph{restricted isometry property}. Sensing matrices that satisfy 
this property with optimal parameters are mainly obtained via probabilistic arguments. Deciding whether a given matrix satisfies the restricted isometry property is a non-trivial computational problem. Indeed, we 
show in this paper that restricted isometry parameters cannot be approximated 
in polynomial time within any constant factor under the assumption
that the hidden clique problem is hard.

 Moreover, on the positive side we propose an improvement on the brute-force enumeration algorithm for checking the restricted isometry property.
\end{abstract}

\section{Introduction}
%
Let $\Phi$ be a $n \times N$ matrix with $N \geq n$. 
A vector $x \in \cc^N$ is said
to be $k$-sparse if it has at most $k$ nonzero coordinates. 
Given $\delta \in ]0,1[$, $\phi$ is said to satisfy the Restricted Isometry
Property (RIP) of order $k$ with parameter $\delta$ if it approximately preserves the Euclidean norm in the following sense:
for every $k$-sparse vector $x$, we have
$$(1-\delta)||x||^2 \leq ||\Phi x||^2 \leq (1+\delta)||x||^2.$$
Clearly, for this to be possible we must have $k \leq n$.
Given $\delta$, $n$ and $N$, the goal is to construct RIP matrices with
$k$ as large as possible.
This problem is motivated by its applications to compressed sensing:
it is known from Cand\`es, Romberg and Tao~\cite{Candes,CRT06,CandesTao} that 
the restricted isometry property enables the efficient recovery of sparse signals using linear programming techniques.
For that purpose one can take any fixed $\delta < \sqrt{2}-1$~\cite{Candes}.

 Various probabilistic models are known to generate random matrices 
that satisfy the RIP with a value of  $k$ which is (almost) linear $n$.
See for instance Theorem~2 in Section~\ref{lazy} for the case of
matrices with entries that are independent symmetric $(\pm 1)$  
Bernouilli matrices. The recent survey~\cite{Vershynin} provides
additional results of this type and extensive references to the probabilistic
literature.
Some significant effort has been devoted to the construction of explicit
(rather than probabilistic) RIP matrices, but this appears to be a 
difficult problem. As pointed out by Bourgain et al. in 
a recent paper~\cite{BDFKK,BDFKKb}, most of the known explicit
constructions~\cite{Kashin75,AGHP92,Devore07}
are based on the construction of systems of unit vectors
with a small coherence parameter (see section~\ref{order} for a definition of this parameter and its connection to the RIP).
Unfortunately, this method cannot produce 
RIP matrices of order $k > \sqrt{n}$~\cite{BDFKK,BDFKKb}.
Bourgain et al. still manage to break through the $\sqrt{n}$
``barrier'' using techniques from additive combinatorics: they construct
RIP matrices of order $k=n^{1/2+\epsilon_0}$ where $\epsilon_0 > 0$ is an
unspecified ``explicit constant''. Note that this is still far from the 
order achieved by probabilistic constructions.
Here we study the restricted isometry property from the point
of view of computational complexity: what is the complexity of deciding
whether a matrix satisfies the RIP, and of computing or approximating
its order $k$ or its RIP parameter $\delta$? 
An efficient (deterministic) algorithm would have applications
to the construction of RIP matrices. One would draw a random matrix $\Phi$
from one of the well-established probabilistic models mentioned above,
and run this hypothetical algorithm on $\Phi$ to compute or approximate
$k$ and $\delta$. The result would be a matrix with {\em certified}
restricted isometry properties (see Section~\ref{lazy} for an actual result
along those lines). This may be the next best thing short of
an explicit construction (and as mentioned above, the known  explicit constructions are far from optimal).
The definition of the restricted isometry property suggests an exhaustive
search over $\binom{N}{k}$ subspaces, but prior to this work there was little 
evidence that checking the RIP is computationally hard (more on this in Section~\ref{hard_intro}).
There has been more work from the algorithm design side. 
In particular, it was shown that semi-definite programming can be used
to verify the restricted isometry property~\cite{ABG08} and other related properties 
from compressed sensing~\cite{AspreG08,IouNem08}. Unfortunately, 
as pointed out in~\cite{AspreG08} these methods are unable to certify
the restricted isometry property
 for $k$ larger than $O(\sqrt{n})$, even for matrices that satisfy the RIP up to order $\Omega(n)$. As we have seen,
$k=O(\sqrt{n})$ is also the range where coherence-based methods reach their limits.
In this paper we provide both positive and negative results on the computational complexity of the RIP, 
including the range $k > \sqrt{n}$.

%
\subsection{Positive Results}
%
In Section~\ref{order}, we study the relation between the RIP parameters
of different orders for a given matrix $\Phi$. Very roughly, we show
in Theorem~\ref{m2k} 
that the RIP parameter is at most proportional to the order.
We therefore have a trade-off between order and RIP parameter:
in order to construct a matrix of given order and RIP parameter, it suffices
to construct a matrix of lower order and smaller RIP parameter.
We illustrate this point in Section~\ref{lazy}. Our starting point
is the above-mentioned (very naive) exhaustive search algorithm,
which enumerates all $\binom{N}{k}$ subspaces generated by $k$ 
column vectors. 
We obtain  a ``lazy algorithm'' which enumerates instead all subspaces
generated by $l$ basis vectors for some $l < k$.
We show that the lazy algorithm can go slightly beyond the $\sqrt{n}$
barrier if a quasi-polynomial running time is allowed.
%
\subsection{Negative Results: the Connection to Hidden Cliques} \label{hard_intro}
%

We show that RIP parameters are hard to approximate  within any constant
factor under the assumption that the hidden clique problem is hard.
In fact, we need an assumption (spelled 
out at the end of this subsection) which is somewhat weaker than the usual one.
Our hardness result applies to any order of the form $k=n^{\alpha}$, 
where $\alpha$ is any constant  in the interval $]0,1[$.
It applies to square as well as to rectangular matrices.
We gave similar results in the unpublished manuscript~\cite{KZ11} under
a (nonstandard) assumption on the complexity of detecting dense subgraphs.
By contrast, as explained below the hypothesis that we use in this paper is 
well established.
Prior to our work, little was known on the hardness of checking the
restricted isometry property. It was pointed out by Terence Tao~\cite{Tao07} 
that ``there is no fast (e.g. sub-exponential time) algorithm known to test whether any given matrix is UUP or not.''\footnote{In his blog post, Tao uses
the notation ``UUP'' for the RIP.}
As to  hardness results, one can mention the NP-hardness proof of~\cite{BDMS12}, which is based on the following (known) fact: it is NP-hard 
to distinguish a matrix with a nonzero $k$-sparse 
vector in its kernel from a matrix without any such vector in its kernel.
In the first case, the matrix does not satisfy the RIP of order $k$, while
in the second case it does satisfy the RIP of order $k$ for some parameter
$\delta$. Since $\delta$ may be very close to 1, this result does not say much on the complexity of approximating the RIP parameters. A similar result
was obtained in~\cite{PT12}.

The size of the largest clique in a typical graph drawn from the $G(n,1/2)$ 
distribution is roughly 
$2 \log_2 n$.
In the hidden clique problem, one must find a clique of
size  $t \gg 2 \log_2 n$ which was planted at random in a random graph.
This problem is solvable in polynomial time for a clique of size 
$t=\Theta(\sqrt{n})$~\cite{AKS98}.
It is widely believed, however,
 that the problem cannot be solved in polynomial time
 for a planted clique of size $t=n^c$, where $c$ is any constant in 
the open interval $]0,1/2[$.
Even the more modest goal of distinguishing between a random graph
and a random graph with a planted clique of size $n^c$ is believed
to require more than polynomial time~\cite{AAMMW11} 
(see appendix B.4 of~\cite{AAK07} for a comparison of  distinguishing
versus finding hidden cliques).

In the last few years, several hardness results have been obtained under
the assumption that the hidden clique problem is not polynomial time 
solvable~\cite{AAMMW11,AAK07,HK11}. 
We refer to~\cite{AAMMW11} for more information on the history of 
this problem.

In this paper, we show hardness of approximation for RIP parameters
under the following assumption. We actually have a family of assumptions,
parameterized by the clique size (in keeping with the tradition in this area~\cite{AKS98}, we omit floor and ceiling signs  to simplify the presentation).
\begin{hypothesis}[$\mathbf H_{\epsilon}$]
There is no polynomial time algorithm $\cal A$ 
which, given as input a graph $G$
on $n$ vertices:
\begin{itemize}
\item always outputs ``yes'' 
if $G$ contains a clique of size $n^{\frac{1}{2}-\epsilon}$.

\item Outputs ``no clique'' on most graphs $G$ when $G$ is drawn from the uniform
distribution $G(n,1/2)$.
\end{itemize}
\end{hypothesis}
In other words, $(H_{\epsilon})$ asserts that no polynomial time algorithm can
certify the absence of a clique of size $n^{\frac{1}{2}-\epsilon}$ 
from most graphs on $n$ vertices (where ``most graphs'' means: with probability
approaching 1 as $n \rightarrow + \infty$).
Note that this is a one-sided hypothesis: algorithm $\cal A$ is allowed to 
 err (rarely) but only on input graphs 
that do not contain a a clique of size 
$n^{\frac{1}{2}-\epsilon}$.

Note also that Hypothesis $H_{\epsilon}$ becomes increasingly stronger as 
$\epsilon \rightarrow 0$ (and it becomes false 
for $\epsilon<0$: if $\alpha>1/2$, 
a simple spectral algorithm can certify that most graphs
on $n$ vertices do not contain any clique of size $n^{\alpha}$. For completeness, 
we give a proof in the appendix).
Hypothesis $H_{\epsilon}$ is clearly true if it is hard to distinguish beween a random graph and a random graph with a planted clique of size $n^{\frac{1}{2}-\epsilon}$. It is therefore consistent with current knowledge to assume
that  $(H_{\epsilon})$ holds true for all 
constants $\epsilon \in ]0,1/2[$.

\subsection{Organization of the Paper} 
%
As explained above, the next two sections are devoted to positive results.
In Section~\ref{eigenvalues} we work out some bounds on the eigenvalues of
random matrices, for later use in our reductions from hidden clique
to the approximation of RIP parameters.
We rely mainly on the classical work of F\"uredi and Koml\'os~\cite{FK81}
as well as on a more recent concentration inequality due to Alon, Krivelevich
and Vu~\cite{AKV02}.
In Section~\ref{hard_square} we use these eigenvalue bound to show that 
approximating RIP parameters is hard even for square matrices.
In Section~\ref{hard_rect} we derive similar results for matrices of
``strictly rectangular'' format (which is the case of interest in compressed
sensing).
We proceed by reduction from the square case.
Interestingly, this last reduction relies on the known constructions
(deterministic~\cite{BDFKK,BDFKKb} and probabilistic~\cite{Vershynin}) 
of matrices with good RIP parameters
mentioned earlier in the introduction.
We therefore turn these positive results into negative results.
The table at the end of Section~\ref{hard_rect} gives a summary of 
our hardness results. 

\section{Increasing the Order by Decreasing the RIP Parameter}
\label{order}
As explained at the beginning of~\cite{BDFKK,BDFKKb}, 
certain (suboptimal) constructions
are based on the construction of systems of unit vectors 
$(u_1,\ldots,u_N) \in \cc^n$ with small coherence. The coherence parameter
$\mu$ is defined as $\max_{i \neq j} |\langle u_i, u_j \rangle|$.
Indeed, we have the following proposition.
\begin{proposition} \label{coherence}
Assume that the column vectors $u_1,\ldots,u_N$ of $\Phi$ are of norm~1
and coherence $\mu$. Then $\Phi$ satisfies the RIP of order $k$ with
parameter $\delta=(k-1)\mu$.
\end{proposition}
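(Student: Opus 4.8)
The plan is to estimate $\|\Phi x\|^2$ directly for an arbitrary $k$-sparse vector $x$ by expanding the squared norm in terms of the Gram matrix of the relevant columns. Let $S$ be the support of $x$, so $|S| \le k$, and write $\Phi x = \sum_{i \in S} x_i u_i$. Then
$$\|\Phi x\|^2 = \sum_{i,j \in S} x_i \overline{x_j} \langle u_i, u_j \rangle = \sum_{i \in S} |x_i|^2 + \sum_{\substack{i,j \in S \\ i \neq j}} x_i \overline{x_j} \langle u_i, u_j \rangle,$$
where the diagonal terms contribute exactly $\|x\|^2$ because the $u_i$ have unit norm. So the whole question reduces to bounding the off-diagonal sum by $\delta \|x\|^2$.

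Next I would bound the off-diagonal contribution in absolute value using the coherence hypothesis and the Cauchy--Schwarz inequality (or the AM--GM bound $|x_i\overline{x_j}| \le (|x_i|^2+|x_j|^2)/2$, which is cleaner here). Since $|\langle u_i, u_j\rangle| \le \mu$ for $i\neq j$, we get
$$\Bigl| \sum_{\substack{i,j \in S \\ i \neq j}} x_i \overline{x_j} \langle u_i, u_j \rangle \Bigr| \le \mu \sum_{\substack{i,j \in S \\ i \neq j}} |x_i|\,|x_j| \le \mu \sum_{\substack{i,j \in S \\ i \neq j}} \frac{|x_i|^2 + |x_j|^2}{2} = \mu (|S|-1) \sum_{i \in S} |x_i|^2 \le (k-1)\mu\, \|x\|^2,$$
using that each index $i \in S$ appears in exactly $|S|-1 \le k-1$ ordered off-diagonal pairs. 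Combining this with the identity above yields $(1-(k-1)\mu)\|x\|^2 \le \|\Phi x\|^2 \le (1+(k-1)\mu)\|x\|^2$, which is exactly the RIP of order $k$ with parameter $\delta = (k-1)\mu$.

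This argument is essentially a one-step computation, so I do not anticipate a genuine obstacle; the only point requiring a little care is the combinatorial bookkeeping of how many off-diagonal pairs each coordinate participates in (the factor $k-1$ rather than $k$), and making sure the bound is stated for $|S| \le k$ rather than $|S| = k$ exactly — but monotonicity in $|S|$ handles that. One should also note in passing that the statement is only meaningful when $(k-1)\mu < 1$, i.e. $\mu < 1/(k-1)$; otherwise the lower bound is vacuous and $\Phi$ need not be RIP of order $k$ at all.
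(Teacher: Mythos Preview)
Your proof is correct and follows essentially the same approach as the paper: expand $\|\Phi x\|^2 - \|x\|^2$ as the off-diagonal Gram sum, bound each inner product by $\mu$, and then control $\sum_{i\neq j}|x_i||x_j|$ by $(k-1)\|x\|^2$. The only cosmetic difference is that the paper writes this last step as $(\sum_i |x_i|)^2 - \|x\|^2 \le (k-1)\|x\|^2$ via the $\ell_1$--$\ell_2$ Cauchy--Schwarz inequality, whereas you use the equivalent AM--GM count; the arguments are interchangeable.
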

We reproduce the proof from~\cite{BDFKK,BDFKKb} since if fits in one line:
for any $k$-sparse vector $x$, 
$$| ||\Phi x||^2 - ||x||^2| \leq 2 \sum_{i<j} |x_ix_j \langle u_i,u_j \rangle| \leq \mu ((\sum_i |x_i|)^2 - ||x||^2) \leq (k-1)\mu||x||^2.$$
%

%
We now give a result, which (as we shall see) generalizes Proposition~\ref{coherence}.
\begin{theorem} \label{m2k}
Assume that $\Phi$ has unit column vectors and satisfies the RIP of order $m$ with parameter $\epsilon$. For $k \geq m$, $\Phi$ also satisfies the RIP
of order $k$ with parameter $\delta = \epsilon(k-1)/(m-1)$.
\end{theorem}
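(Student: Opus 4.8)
The plan is to reduce the RIP of order $k$ to the RIP of order $m$ by writing an arbitrary $k$-sparse vector as an average of $m$-sparse vectors, and then controlling the cross terms. Fix a $k$-sparse vector $x$ with support $S$, $|S| = k$. The natural identity to exploit is that the quadratic form $\|\Phi x\|^2 = x^* G x$, where $G = \Phi^*\Phi$ is the Gram matrix of the column vectors. Since the columns are unit vectors, the diagonal of $G$ is all ones, so $\|\Phi x\|^2 - \|x\|^2 = \sum_{i \neq j \in S} \overline{x_i} x_j G_{ij}$, and we must bound the absolute value of this by $\delta \|x\|^2$ with $\delta = \epsilon(k-1)/(m-1)$. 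The RIP of order $m$ tells us exactly that for any subset $T \subseteq S$ with $|T| = m$, $|\sum_{i \neq j \in T} \overline{x_i} x_j G_{ij}| \le \epsilon \sum_{i \in T} |x_i|^2$.

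First I would sum this order-$m$ inequality over all $\binom{k}{m}$ subsets $T$ of $S$ of size $m$. On the right-hand side, each index $i \in S$ appears in $\binom{k-1}{m-1}$ of these subsets, so the sum of the right-hand sides is $\epsilon \binom{k-1}{m-1} \|x\|^2$. On the left-hand side, I would first pull the absolute value outside the sum over $T$ (triangle inequality), so that I am bounding $\bigl|\sum_T \sum_{i\neq j \in T} \overline{x_i}x_j G_{ij}\bigr|$ — wait, that is the wrong direction; instead I sum the inequalities $|\,\cdot\,| \le \epsilon(\cdot)$ termwise to get $\sum_T |\sum_{i \neq j \in T}\overline{x_i}x_jG_{ij}| \le \epsilon\binom{k-1}{m-1}\|x\|^2$, and then apply the triangle inequality once more to get $|\sum_T \sum_{i\neq j\in T}\overline{x_i}x_jG_{ij}| \le \epsilon\binom{k-1}{m-1}\|x\|^2$. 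Now in the double sum on the left, each ordered pair $(i,j)$ with $i \neq j$ in $S$ appears in exactly $\binom{k-2}{m-2}$ subsets $T$ (those containing both $i$ and $j$), so $\sum_T \sum_{i\neq j \in T}\overline{x_i}x_jG_{ij} = \binom{k-2}{m-2}\sum_{i\neq j \in S}\overline{x_i}x_jG_{ij} = \binom{k-2}{m-2}(\|\Phi x\|^2 - \|x\|^2)$.

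Combining, $|\|\Phi x\|^2 - \|x\|^2| \le \epsilon \frac{\binom{k-1}{m-1}}{\binom{k-2}{m-2}}\|x\|^2 = \epsilon\frac{k-1}{m-1}\|x\|^2$, which is the claimed bound with $\delta = \epsilon(k-1)/(m-1)$. I would also note the edge case $m = 1$ (trivial, since RIP of order $1$ with unit columns forces $\epsilon = 0$ and the formula is vacuous or interpreted as $0$) is not really at issue since the interesting regime is $m \ge 2$. Finally, to see this generalizes Proposition~\ref{coherence}: a matrix with coherence $\mu$ satisfies the RIP of order $m = 2$ with parameter $\epsilon = \mu$ (this is exactly the $k=2$ case of that proposition, or a direct two-term estimate), and plugging $m = 2$, $\epsilon = \mu$ into Theorem~\ref{m2k} gives $\delta = \mu(k-1)$, recovering Proposition~\ref{coherence}.

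The main obstacle is purely bookkeeping: getting the two combinatorial counting arguments right (each index in $\binom{k-1}{m-1}$ subsets, each pair in $\binom{k-2}{m-2}$ subsets) and being careful that the triangle inequality is applied in the direction that preserves the desired inequality — the absolute value on the left must be split across subsets $T$ \emph{before} recombining, so that the $\epsilon$ on the right genuinely controls the recombined double sum. There is no analytic difficulty; the whole proof is a weighted average of the order-$m$ estimates.
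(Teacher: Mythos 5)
Your proof is correct and follows essentially the same double-counting argument as the paper: sum the order-$m$ RIP inequality over all size-$m$ subsets of the support, count $\binom{k-1}{m-1}$ occurrences of each diagonal term on the right and $\binom{k-2}{m-2}$ occurrences of each cross term on the left, and divide. The only cosmetic differences are that you phrase the cross terms via the Gram matrix with ordered pairs $\overline{x_i}x_j G_{ij}$ (which handles complex entries a bit more explicitly than the paper's real-looking $2\sum_{i<j}x_ix_j\langle u_i,u_j\rangle$), and you flag the vacuous $m=1$ edge case, which the paper silently ignores.
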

\begin{proof}
Let $u_1,\ldots,u_N$ be the column vectors of $\Phi$. Let $x$ be a $k$-sparse
vector, and 
write $x=\sum_{i \in T} x_iu_i$ where $T$ is a subset of $\{1,\ldots,N\}$ of
size $k$.
Since $||\Phi x||^2 = ||x||^2 + 2 \sum_{i<j} x_ix_j \langle u_i,u_j \rangle$, to check the RIP of order $k$ we need to show that
\begin{equation} \label{innerprods}
|\sum_{i<j} x_i x_j \langle u_i,u_j \rangle| \leq \delta||x||^2/2,
\end{equation}
where $\delta = \epsilon(k-1)/(m-1)$.
To estimate the left hand side, we compare it to the sum of 
the similar quantity taken over all subsets of size $m$ of $T$, namely:
\begin{equation} \label{sum}
|\sum_{|S|=m} \sum_{i,j \in S, i<j} x_i x_j \langle u_i,u_j \rangle|.
\end{equation}
Since each pair $(i,j)$ appears in exactly $\binom{k-2}{m-2}$ subsets of 
size $m$, this sum is equal to $\binom{k-2}{m-2}$ times the left-hand side
of~(\ref{innerprods}). But we can also estimate (\ref{sum}) using the RIP
of order $m$. 
For each subset $S$ of size $m$, we have 
$$|\sum_{i,j \in S, i<j} x_i x_j \langle u_i,u_j \rangle| \leq \epsilon\sum_{i \in S}x_i^2/2.$$
This follows from~(\ref{innerprods}), replacing $\delta$ 
by $\epsilon$ (the RIP parameter of order $m$).
Since each term $x_i^2$ will appear exactly in $\binom{k-1}{m-1}$ subsets,
we obtain $\epsilon \binom{k-1}{m-1}||x||^2/2$ 
as an upper bound for~(\ref{sum}).
We conclude that  the left-hand side
of~(\ref{innerprods}) is bounded by $\frac{\epsilon}{2} \binom{k-1}{m-1}||x||^2/\binom{k-2}{m-2} = 
\epsilon \frac{k-1}{m-1}||x||^2/2$.\end{proof}
We claim that Proposition~\ref{coherence} is the case $m=2$ of 
Theorem~\ref{m2k}. This follows from the following observation.
\begin{remark} \label{order2}
For a matrix $\Phi$ with unit column vectors, the coherence parameter $\mu$
is equal to the RIP parameter of order 2.
\end{remark}
\begin{proof}
Let $\delta$ be the RIP parameter of order 2. We have $\delta \leq \mu$ by Proposition~\ref{coherence}. It remains to show that $\delta \geq \mu$.
Consider therefore two column vectors $u_i$ and $u_j$ with 
$|\langle u_i,u_j \rangle| = \mu$. 
Let $x=u_i+u_j$. We have $||x||^2=2$ and $||\Phi x||^2=2 \pm 2\mu$, so 
that $\delta \geq \mu$ indeed.
\end{proof}

\section{A Matrix Certification Algorithm} \label{lazy}
%
%


%
%
The naive algorithm for computing the RIP parameter of order $k$ will involve the enumeration of the $\binom{N}{k}$ submatrices of $\Phi$ made up
of $k$ column vectors of $\Phi$. For each $T \subseteq \{1,\ldots,N\}$ of
size $k$ let us denote by $\Phi_T$ the corresponding $n \times k$ matrix.
We need to compute (or upper bound) $\delta = \max_T \delta_T$, where 
$$\delta_T = \sup_{x \in \cc^k} |\ ||\Phi_T x||^2 /||x||^2 -1\ |.$$
For each $T$, $\delta_T$ can be computed efficiently by linear algebra. 
For instance, $\delta_T$ is the spectral radius of the self-adjoint matrix
$\Phi_T^* \Phi_T - \Id_k$.
The cost of the computation is therefore dominated by the combinatorial factor
$\binom{N}{k}$ due to the enumeration of all subsets of size $k$.
Here we analyze what the naive algorithm can gain from Theorem~\ref{m2k}.
We therefore consider the following {\em lazy algorithm.}
\begin{algorithm}{}
	\caption{}\label{alg:lazy}
\begin{algorithmic}[1]
\Procedure{Lazy}{$\Phi$, $m$, $\delta$}
\State {\bf Input:} a $n \times N$ matrix $\Phi$ with unit column vectors, an integer $m \leq n$,
and a parameter $\delta \in ]0,1[$.
\State Compute as explained above the RIP parameter of order $m$. Call it $\epsilon$.
\State {\bf Output:} Certify $\Phi$ as a RIP matrix of order $k$ with parameter $\delta$,
for all $k \geq m$ such that $\epsilon (k-1)/(m-1) \leq \delta$.
\EndProcedure 
\end{algorithmic}
\end{algorithm}
The correctness of the 
algorithm follows immediately from Theorem~\ref{m2k}. We now analyze its behavior on random matrices, which are in many cases known to satisfy the RIP with high probability. Consider for instance 
the case of a matrix whose entries are independent symmetric Bernouilli
random variables.
\begin{theorem} \label{randrip}
Let $A$ be a $n \times N$ matrix whose entries are  independent symmetric 
Bernouilli random variables and assume that 
$n \geq C \epsilon^{-2}m \log(eN/m)$.
With probability at least $1-2\exp(-c\epsilon^2n)$, 
the normalized matrix $\Phi=\frac1{\sqrt{n}}A$ 
satisfies the RIP of order $m$ with parameter $\epsilon$.
Here $C$ and $c$ are absolute constants.
\end{theorem}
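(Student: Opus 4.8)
The plan is to reduce Theorem~\ref{randrip} to a standard concentration-plus-net argument, which is by now routine in the compressed sensing literature (it goes back to Baraniuk--Davenport--DeVore--Wakin and is reproduced in surveys such as~\cite{Vershynin}). First I would fix a subset $T \subseteq \{1,\dots,N\}$ of size $m$ and study the $n \times m$ submatrix $\Phi_T = \frac{1}{\sqrt n} A_T$. For a fixed unit vector $x$ supported on $T$, the quantity $\|\Phi_T x\|^2 = \frac1n \sum_{i=1}^n \langle a_i, x\rangle^2$ is an average of $n$ i.i.d.\ mean-one random variables (each $a_i$ being a row of $A$), and since the entries of $A$ are symmetric Bernoulli, $\langle a_i, x\rangle$ is subgaussian. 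A Hoeffding/Bernstein-type bound then gives $\Prob\big(\,|\,\|\Phi_T x\|^2 - 1\,| > \epsilon/2\,\big) \leq 2\exp(-c_0 \epsilon^2 n)$ for an absolute constant $c_0$.

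Next I would upgrade this pointwise estimate to a uniform one over the unit sphere of the coordinate subspace $\mathbb{R}^T$. Choose an $(\epsilon/8)$-net $\mathcal N_T$ of that sphere; a standard volumetric bound gives $|\mathcal N_T| \leq (24/\epsilon)^m$. A union bound over $\mathcal N_T$ shows that with probability at least $1 - 2(24/\epsilon)^m \exp(-c_0\epsilon^2 n)$ the estimate $|\,\|\Phi_T x\|^2 - 1\,| \leq \epsilon/2$ holds simultaneously for all $x \in \mathcal N_T$. A short approximation argument (bounding the operator norm of $\Phi_T^*\Phi_T - \Id_m$ by its values on a net, as in the usual net lemma) then promotes this to $\sup_{x}|\,\|\Phi_T x\|^2/\|x\|^2 - 1\,| \leq \epsilon$, i.e.\ $\delta_T \leq \epsilon$, on that event.

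Finally I would union-bound over all $\binom{N}{m}$ subsets $T$. Using $\binom{N}{m} \leq (eN/m)^m$, the failure probability is at most
$$2\left(\frac{eN}{m}\right)^m \left(\frac{24}{\epsilon}\right)^m \exp(-c_0\epsilon^2 n) = 2\exp\!\big(m\log(eN/m) + m\log(24/\epsilon) - c_0\epsilon^2 n\big).$$
The hypothesis $n \geq C\epsilon^{-2} m\log(eN/m)$ is exactly what is needed to absorb the two positive terms in the exponent: for $C$ large enough (absolute), $c_0\epsilon^2 n$ exceeds twice the sum $m\log(eN/m) + m\log(24/\epsilon)$ — here one uses $\log(24/\epsilon) = O(\log(eN/m))$, which holds because $\epsilon$ can be taken bounded below by a function of $N$ without loss, or more cleanly by noting $1/\epsilon \leq \sqrt{n} \leq N$ in any nontrivial regime — leaving failure probability at most $2\exp(-c\epsilon^2 n)$ for a suitable absolute $c < c_0$. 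On the complementary event every $\delta_T \leq \epsilon$, so $\delta = \max_T \delta_T \leq \epsilon$ and $\Phi$ satisfies the RIP of order $m$ with parameter $\epsilon$.

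The only delicate point — and the one I would be careful about — is the bookkeeping of absolute constants in the last step, specifically handling the $\log(1/\epsilon)$ term so that it is genuinely dominated by $C\epsilon^{-2}m\log(eN/m)$; this is standard but is the place where a sloppy argument can hide a spurious $\log(1/\epsilon)$ factor in the sample complexity. Everything else is the textbook net argument, and since the statement only asks for the existence of absolute constants $C$ and $c$, no optimization is required.
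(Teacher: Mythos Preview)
The paper does not prove Theorem~\ref{randrip}; it is stated as a known result and attributed to Vershynin's survey~\cite{Vershynin} (Theorem~64 there), which in fact covers the more general subgaussian case. Your sketch is precisely the standard concentration-plus-net argument from that literature (going back to Baraniuk--Davenport--DeVore--Wakin), so in that sense you are reproducing the proof behind the paper's citation.

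One remark on the point you yourself flag as delicate. Your justification for $\log(24/\epsilon)=O(\log(eN/m))$ via ``$1/\epsilon \leq \sqrt{n} \leq N$'' only gives $\log(1/\epsilon) \leq \log N$, which is \emph{not} $O(\log(eN/m))$ when $m$ is a constant fraction of $N$. The claim is nonetheless true: combining $n \leq N$ with the hypothesis $n \geq C\epsilon^{-2}m\log(eN/m)$ gives $\epsilon^2 \geq Cm\log(eN/m)/N$, and setting $L=\log(eN/m)$ (so $N/m=e^{L-1}$) this yields $\log(1/\epsilon) \leq \tfrac12\bigl(L-1-\log(CL)\bigr) \leq L/2$. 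But the cleaner route---and the one actually taken in~\cite{Vershynin}---is to use a net of \emph{constant} radius (say $1/4$) rather than radius $\epsilon/8$: the net then has size at most $9^m$, the operator-norm approximation lemma gives $\|\Phi_T^*\Phi_T-\Id_m\| \leq 2\max_{x\in\mathcal N_T}|\langle(\Phi_T^*\Phi_T-\Id_m)x,x\rangle|$, and the $\log(1/\epsilon)$ term never appears in the exponent.
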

In fact the same theorem holds for a very large class of random matrix models,
namely, subgaussian matrices with either independent rows or independent columns (\cite{Vershynin}, Theorem~64).
\begin{proposition} \label{lazy_analysis}
Let $A$ be a random matrix as in Theorem~\ref{randrip}, 
and $\delta \in ]0,1[$. 
With probability at least $1-2(eN/m)^{-cCm}$, 
the lazy algorithm presented above will certify 
that $A$ satisfies the RIP of order $k$ with parameter $\delta$ for
all $k$ such that: $$k \leq \delta \sqrt{\frac{mn}{c \log (eN/m)}}.$$
Here $c$ and $C$ are the absolute constants from Theorem~\ref{randrip}.
\end{proposition}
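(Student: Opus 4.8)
The plan is to simply chain together Theorem~\ref{randrip} (to get a handle on $\epsilon$, the RIP parameter of order $m$) and Theorem~\ref{m2k} (to propagate that bound up to order $k$), being careful about how the hypotheses of Theorem~\ref{randrip} translate into a usable value of $\epsilon$.

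First I would fix the free parameter $\epsilon$ in Theorem~\ref{randrip} to its extreme admissible value: given $m$, $n$ and $N$, the constraint $n \geq C\epsilon^{-2} m \log(eN/m)$ is satisfiable precisely when $\epsilon \geq \sqrt{Cm\log(eN/m)/n}$, so I would set $\epsilon = \sqrt{Cm\log(eN/m)/n}$. With this choice Theorem~\ref{randrip} tells us that, except on an event of probability at most $2\exp(-c\epsilon^2 n) = 2\exp(-cCm\log(eN/m)) = 2(eN/m)^{-cCm}$, the normalized matrix $\Phi = \frac{1}{\sqrt n}A$ satisfies the RIP of order $m$ with parameter $\epsilon$. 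This already matches the failure probability claimed in the proposition, so the probabilistic content is entirely absorbed in this step.

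Next I would invoke Theorem~\ref{m2k}: on the good event, for every $k \geq m$ the matrix $\Phi$ satisfies the RIP of order $k$ with parameter $\delta_k := \epsilon(k-1)/(m-1)$. The lazy algorithm certifies order $k$ with parameter $\delta$ exactly when $\delta_k \leq \delta$, i.e. when $\epsilon(k-1)/(m-1) \leq \delta$. It remains to check that the stated bound $k \leq \delta\sqrt{mn/(c\log(eN/m))}$ implies this inequality. Substituting $\epsilon = \sqrt{Cm\log(eN/m)/n}$ and using the crude estimates $k - 1 \leq k$ and $m - 1 \geq m/2$ (valid for $m \geq 2$; the case $m=1$ being vacuous since then $\delta = \epsilon(k-1)/0$ is not used — one takes $m \geq 2$) one gets $\epsilon(k-1)/(m-1) \leq \sqrt{Cm\log(eN/m)/n}\cdot k/(m/2) \leq \delta$ after rearranging, provided the absolute constant in the displayed bound is chosen appropriately relative to $C$. (Strictly, the clean-looking statement in the proposition hides a constant: what one proves is $k \leq \mathrm{const}\cdot\delta\sqrt{mn/\log(eN/m)}$, and the constant can be folded into the $c$ appearing there, or one simply states the result with an unspecified absolute constant.)

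The only mildly delicate point — and the one I would be most careful about — is the bookkeeping of absolute constants and the precise exponent in the failure probability: one must make sure the $c$ that appears inside $\exp(-c\epsilon^2 n)$ in Theorem~\ref{randrip} is the same $c$ that ends up both in the exponent $(eN/m)^{-cCm}$ and under the square root in the final bound, or else acknowledge (as the proposition implicitly does by writing "$c$ and $C$ are the absolute constants from Theorem~\ref{randrip}") that these are the same symbols. There is no real mathematical obstacle here — the argument is a two-line composition of the two cited results — the work is entirely in presenting the constant-chasing cleanly so that the statement as written is literally correct.
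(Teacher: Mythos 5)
Your proof is the paper's proof: set $\epsilon$ at the boundary of Theorem~\ref{randrip}'s hypothesis, so that $n\epsilon^2 = Cm\log(eN/m)$, which gives the stated failure probability $2(eN/m)^{-cCm}$; then translate the lazy algorithm's acceptance condition $\epsilon(k-1)/(m-1)\le\delta$ into the bound $k\le\delta m/\epsilon$ (up to the $\pm 1$ slack you correctly flag). Your observation about the constants is also apt: the clean substitution yields $\delta m/\epsilon = \delta\sqrt{mn/(C\log(eN/m))}$ with an uppercase $C$, so the lowercase $c$ under the square root in the displayed bound should be read as a typo or as an unspecified absolute constant, exactly as you note.
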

\begin{proof}
All parameters being fixed we take $\epsilon$ as small as allowed by 
Theorem~\ref{randrip}, so that $n\epsilon^2=Cm \log(eN/m)$.
This yields the announced probability estimate, and the upper bound on $k$ 
is $\delta m/\epsilon$.
\end{proof}
To compare the lazy algorithm to the naive algorithm, set for instance 
$m=\sqrt{n}$. In applications to compressed sensing one can set $\delta$ to
a small constant value (any $\delta < \sqrt{2}-1$ will do). Thus, disregarding constant and
logarithmic factors, with high probability the lazy algorithm will certify 
the RIP property for $k$ of order roughly $n^{3/4}$.
This is achieved by enumerating 
$\binom{N}{n^{1/2}}$ subspaces, whereas the naive algorithm would enumerate roughly $\binom{N}{n^{3/4}}$ subspaces.

Another choice of parameters in Proposition~\ref{lazy_analysis} shows that one can beat the $\sqrt{n}$ 
bound by a logarithmic factor
with a quasi-polynomial time algorithm.
For instance:
\begin{corollary}
If we set $m=(\log N)^3$, the lazy algorithm runs in time $2^{O(\log^4 N)}$
and, with probability at least $1-2^{-\Omega((\log N)^4)}$
 certifies that $A$ satisfies the RIP of order $k$ with parameter $\delta$
for all $k \leq K \delta \log N \sqrt{n}$, where $K$ is an absolute constant.
\end{corollary}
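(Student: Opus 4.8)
The plan is to specialize Proposition~\ref{lazy_analysis} to the choice $m=(\log N)^3$ and simplify the resulting expressions. Three quantities have to be estimated: the running time of the lazy algorithm, its probability of success, and the largest order $k$ for which it certifies the RIP of parameter $\delta$.

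For the running time, the lazy algorithm enumerates the $\binom{N}{m}$ submatrices of $\Phi$ with $m$ columns and, for each one, computes a spectral radius in time polynomial in $n$ and $m$. With $m=(\log N)^3$ we have $\log\binom{N}{m}\le m\log N=O(\log^4 N)$, so even after multiplying by the polynomial per-subset cost the total running time is $2^{O(\log^4 N)}$. For the probability, Proposition~\ref{lazy_analysis} guarantees success with probability at least $1-2(eN/m)^{-cCm}$; since $(\log N)^3$ is eventually far smaller than $\sqrt N$, we have $eN/m\ge\sqrt N$ for $N$ large, and hence the failure probability is at most $2N^{-cCm/2}=2^{-\Omega(m\log N)}=2^{-\Omega((\log N)^4)}$.

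For the order, Proposition~\ref{lazy_analysis} certifies the RIP of parameter $\delta$ for every $k\le\delta\sqrt{mn/(c\log(eN/m))}$, so it suffices to exhibit an absolute constant $K$ with $\sqrt{m/(c\log(eN/m))}\le K\log N$. The one mildly delicate point is that $\log(eN/m)=\Theta(\log N)$ for this value of $m$: indeed $\log(eN/m)\le\log(eN)\le 2\log N$, while $\log(eN/m)=1+\log N-3\log\log N\ge\tfrac12\log N$ once $N$ is large. Substituting $m=(\log N)^3$ and $\log(eN/m)\ge\tfrac12\log N$ then gives $k\le\delta\sqrt{2(\log N)^3 n/(c\log N)}=\sqrt{2/c}\,\delta\log N\sqrt n$, so $K=\sqrt{2/c}$ works. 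I do not expect any real obstacle beyond this bookkeeping; the only implicit assumption is that $n$ is large enough relative to $\log N$ for the parameters to be consistent, i.e.\ so that $m\le n$, so that the value $\epsilon=\sqrt{Cm\log(eN/m)/n}$ used inside Proposition~\ref{lazy_analysis} lies in $]0,1[$ (which needs $n>Cm\log(eN/m)=\Theta((\log N)^4)$), and so that the asserted range $k\le K\delta\log N\sqrt n$ actually contains orders $k\ge m$ (which needs $n=\Omega((\log N)^4)$). All of these hold in the regime of interest.
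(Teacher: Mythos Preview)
Your approach---specialize Proposition~\ref{lazy_analysis} to $m=(\log N)^3$ and simplify---is exactly what the paper intends, and your estimates for the running time and for the success probability are correct. However, the final step of the order estimate has the inequality direction reversed.

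Proposition~\ref{lazy_analysis} certifies the RIP for all $k\le B$, where $B=\delta\sqrt{mn/(c\log(eN/m))}$. To conclude that the algorithm certifies the RIP for all $k\le K\delta\log N\sqrt{n}$, you need $K\delta\log N\sqrt{n}\le B$, i.e.\ a \emph{lower} bound on $B$, hence an \emph{upper} bound on $\log(eN/m)$. You substituted the lower bound $\log(eN/m)\ge\tfrac12\log N$, which only yields $B\le\sqrt{2/c}\,\delta\log N\sqrt{n}$; that is the wrong direction and does not let you enlarge the certified range to all $k\le\sqrt{2/c}\,\delta\log N\sqrt{n}$. Use instead the upper bound $\log(eN/m)\le 2\log N$ that you already recorded: it gives
\[
B \;\ge\; \delta\sqrt{\frac{(\log N)^3\, n}{2c\log N}} \;=\; \frac{1}{\sqrt{2c}}\,\delta\log N\sqrt{n},
\]
so $K=1/\sqrt{2c}$ works. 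Equivalently, your stated sufficient condition ``$\sqrt{m/(c\log(eN/m))}\le K\log N$'' should read $\ge$. With this single sign fix the argument is complete.
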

%
%

\section{Eigenvalues of Random Symmetric Matrices} \label{eigenvalues}
%
Proposition~\ref{modelC} is the main probabilistic inequality that we derive in this section.
It shows that square matrices obtained by Cholesky decomposition from a certain class of random matrices
have good RIP parameters with high probability. This result is then used in Section~\ref{hard_square} 
to give a reduction from  hidden clique 
to the approximation of RIP parameters.
%
\subsection{Model A}
%
Consider the following random matrix model: $A$ is a symmetric $k \times k$ 
matrix with $a_{ii}=0$, and for $i < j$ the $a_{ij}$ are independent symmetric
Bernouilli random variables.

Let $\lambda_1(A) \geq \lambda_2(A) \geq \dots \lambda_k(A)$ 
be the eigenvalues of $A$. Let $m_s$ be the median of $\lambda_s(A)$.
From the main result of~\cite{AKV02} (bottom of p.~$263$) we have for $t \geq 0$
the inequality:
\[\Pr[\lambda_s(A) - m_s \geq t] \leq 2e^{-t^2/32s^2}.\]
From F\"uredi and Koml\'os (\cite{FK81}, Theorem~2) we know that 
$m_1 \leq 3 \sigma \sqrt{k}$ for~$k$ large enough, 
where $\sigma=1$ is the standard deviation of the $a_{ij}$ in the case $i<j$.
Therefore we have 
\[\Pr[\lambda_1(A) \geq 3 \sqrt{k}+t] \leq 2e^{-t^2/32}.\]
Since $\lambda_k(A)=-\lambda_1(-A)$ and $-A$ has same distribution as $A$,
we also have 
\[\Pr[\lambda_k(A) \leq -3 \sqrt{k}-t] \leq 2e^{-t^2/32}\]
(one could also apply directly the bound on $\lambda_k(A)$ for the more
general model considered in~\cite{AKV02}).
As a result:
\begin{proposition} \label{modelA}
There is an integer $k_0$ such that for all $k \geq k_0$ 
and for all $t \geq 0$ we have:
\[\Pr[ \max_i |\lambda_i(A)| \geq 3 \sqrt{k}+t] \leq 4e^{-t^2/32}.\]
\end{proposition}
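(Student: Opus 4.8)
The plan is to reduce the two-sided estimate on $\max_i |\lambda_i(A)|$ to the two one-sided tail bounds already established in the paragraphs preceding the statement, via a union bound. The elementary observation that makes this work is that, since $\lambda_1(A) \geq \lambda_2(A) \geq \dots \geq \lambda_k(A)$, the eigenvalue of largest modulus is either the top one or the bottom one: $\max_i |\lambda_i(A)| = \max\bigl(\lambda_1(A),\, -\lambda_k(A)\bigr)$. Hence, for every $t \geq 0$, the event $\{\max_i |\lambda_i(A)| \geq 3\sqrt{k}+t\}$ is contained in the union $\{\lambda_1(A) \geq 3\sqrt{k}+t\} \cup \{\lambda_k(A) \leq -3\sqrt{k}-t\}$.

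Next I would take $k_0$ to be the threshold beyond which the F\"uredi--Koml\'os bound $m_1 \leq 3\sqrt{k}$ is valid, which is the only place ``$k$ large enough'' is needed. For $k \geq k_0$, feeding $m_1 \leq 3\sqrt{k}$ into the Alon--Krivelevich--Vu inequality at $s=1$ gives $\Pr[\lambda_1(A) \geq 3\sqrt{k}+t] \leq \Pr[\lambda_1(A) - m_1 \geq t] \leq 2e^{-t^2/32}$, and the symmetry argument ($\lambda_k(A) = -\lambda_1(-A)$, with $-A$ distributed as $A$) yields the matching bound $\Pr[\lambda_k(A) \leq -3\sqrt{k}-t] \leq 2e^{-t^2/32}$. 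Adding these two estimates through the union bound above gives $\Pr[\max_i |\lambda_i(A)| \geq 3\sqrt{k}+t] \leq 4e^{-t^2/32}$, which is exactly the claim.

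In truth there is no genuinely hard step: the substance is carried entirely by the two cited results, and what remains is a one-line union bound plus the sorting remark. The only points requiring a little care are bookkeeping ones --- verifying that Model~A (symmetric, zero diagonal, independent symmetric Bernouilli off-diagonal entries) really does satisfy the hypotheses of both \cite{AKV02} and \cite{FK81}, and that it is the $s=1$ instance of the Alon--Krivelevich--Vu bound that is used (so the exponent constant is $32$ and not $32s^2$). The integer $k_0$ is simply inherited from the asymptotic form of the F\"uredi--Koml\'os estimate, and no attempt to make it explicit is needed.
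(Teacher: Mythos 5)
Your proposal is correct and mirrors the paper's own derivation exactly: apply the Alon--Krivelevich--Vu concentration inequality at $s=1$ with the F\"uredi--Koml\'os median bound $m_1 \leq 3\sqrt{k}$ to control $\lambda_1(A)$, use the symmetry $\lambda_k(A) = -\lambda_1(-A)$ (with $-A \sim A$) for the bottom eigenvalue, and conclude by a union bound over the two tails. There is no substantive difference from the paper's argument.
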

\begin{remark} \label{constant3}
The constant 3 in Proposition~\ref{modelA} can be replaced
by any constant bigger than 2 (see Theorem~2 in~\cite{FK81}).
\end{remark}
%
\subsection{Model B}
%
Next we consider the model where $B$ is a symmetric $k \times k$ matrix satisfying the following condition: $b_{ii}=1$, and $b_{ij}=c \cdot a_{ij}/\sqrt{n}$ 
 for $i<j$, where the $a_{ij}$ are independent symmetric
Bernouilli random variables. Here $c>0$ is a fixed constant, and $n$ is an additional parameter 
which should be thought of as going to infinity with $k$.
\begin{corollary} \label{modelB}
Assume that $k \geq k_0$ 
and that 
$\delta\sqrt{n} \geq 3c\sqrt{k}$.
Then the eigenvalues of $B$ all lie in the interval
$[1-\delta,1+\delta]$ with probability at least
\[1-4\exp[-(\frac{\delta\sqrt{n}}{c} -3\sqrt{k})^2/32].\]
\end{corollary}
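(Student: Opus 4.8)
The plan is to derive Corollary~\ref{modelB} directly from Proposition~\ref{modelA} by writing $B$ in terms of the Model~A matrix $A$. The key observation is that $B = \Id_k + \frac{c}{\sqrt{n}} A$, since $b_{ii} = 1$ matches the identity on the diagonal, and for $i \neq j$ the off-diagonal entries $b_{ij} = c\, a_{ij}/\sqrt{n}$ match $\frac{c}{\sqrt{n}}$ times the (zero-diagonal, symmetric Bernoulli) matrix $A$. Hence every eigenvalue of $B$ is of the form $1 + \frac{c}{\sqrt{n}}\lambda_i(A)$, so the eigenvalues of $B$ lie in $[1-\delta, 1+\delta]$ if and only if $\max_i |\lambda_i(A)| \leq \delta \sqrt{n}/c$.

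Next I would apply Proposition~\ref{modelA}. That result (valid for $k \geq k_0$) states $\Pr[\max_i |\lambda_i(A)| \geq 3\sqrt{k} + t] \leq 4 e^{-t^2/32}$ for all $t \geq 0$. I want to choose $t$ so that $3\sqrt{k} + t = \delta\sqrt{n}/c$, i.e. $t = \delta\sqrt{n}/c - 3\sqrt{k}$. The hypothesis $\delta\sqrt{n} \geq 3c\sqrt{k}$ is exactly what guarantees $t \geq 0$, so the inequality is applicable with this value of $t$. Substituting gives
\[
\Pr\Bigl[\max_i |\lambda_i(A)| \geq \frac{\delta\sqrt{n}}{c}\Bigr] \leq 4\exp\Bigl[-\bigl(\tfrac{\delta\sqrt{n}}{c} - 3\sqrt{k}\bigr)^2/32\Bigr],
\]
and taking complements yields that $\max_i|\lambda_i(A)| < \delta\sqrt{n}/c$, hence all eigenvalues of $B$ lie in $[1-\delta, 1+\delta]$, with probability at least $1 - 4\exp[-(\delta\sqrt{n}/c - 3\sqrt{k})^2/32]$, which is precisely the claimed bound.

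There is essentially no obstacle here: the whole argument is a change of variables in the eigenvalue statement plus the identity $B = \Id_k + \frac{c}{\sqrt n}A$. The only point requiring a word of care is checking that the ``if and only if'' between the eigenvalue bounds for $B$ and for $A$ is correct in both directions (so that the event for $B$ is exactly the complement of the bad event for $A$), and that the constant $k_0$ and the requirement $k \geq k_0$ are simply inherited from Proposition~\ref{modelA}. One could also remark, as in Remark~\ref{constant3}, that the constant $3$ may be lowered to any constant exceeding $2$, which would correspondingly relax the hypothesis to $\delta\sqrt{n} \geq (2+o(1))c\sqrt{k}$, but this refinement is not needed for the statement as given.
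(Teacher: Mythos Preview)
Your proof is correct and follows exactly the same route as the paper: write $B=\Id_k+\frac{c}{\sqrt{n}}A$, translate the eigenvalue condition on $B$ into $\max_i|\lambda_i(A)|\leq \delta\sqrt{n}/c$, and apply Proposition~\ref{modelA} with $t=\delta\sqrt{n}/c-3\sqrt{k}$, the hypothesis $\delta\sqrt{n}\geq 3c\sqrt{k}$ ensuring $t\geq 0$. Your write-up is in fact more detailed than the paper's, which dispatches the argument in two sentences.
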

\begin{proof}
We have $B=\Id_k + cA/\sqrt{n}$, where $A$ follows the model of Proposition~\ref{modelA}. The result therefore follows from that proposition by choosing
$t$ so that $c(3\sqrt{k}+t)/\sqrt{n}=\delta$, i.e., 
$t=\delta\sqrt{n}/c-3\sqrt{k}$.
\end{proof}
In the next corollary we look at the case $n=k$ of this model.
\begin{corollary} \label{modelB'}
Assume  that $n \geq k_0$ 
and  $3c < 1$.
Then $B$ is positive semi-definite with probability at least
$$1-4\exp[-(1/c -3)^2 n/32].$$
\end{corollary}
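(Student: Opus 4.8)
The plan is to obtain this statement as an immediate specialization of Corollary~\ref{modelB}. Recall that a symmetric matrix is positive semi-definite exactly when all of its eigenvalues are nonnegative, so it is enough to make the interval $[1-\delta,1+\delta]$ furnished by Corollary~\ref{modelB} fit inside $[0,+\infty)$. The maximal value of $\delta$ for which $1-\delta\geq 0$ is $\delta=1$, which yields the interval $[0,2]$; this choice is also the one that makes the probability bound in Corollary~\ref{modelB} as strong as possible, since the exponent there is decreasing in the gap $\delta\sqrt{n}/c-3\sqrt{k}$, which grows with $\delta$.

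Concretely, I would apply Corollary~\ref{modelB} with $k=n$ and $\delta=1$. The assumption $n\geq k_0$ is in force, and the hypothesis $\delta\sqrt{n}\geq 3c\sqrt{k}$ of that corollary becomes $\sqrt{n}\geq 3c\sqrt{n}$, i.e. $3c\leq 1$, which holds since $3c<1$ by hypothesis. Corollary~\ref{modelB} then gives that, with probability at least
\[1-4\exp\left[-\left(\frac{\sqrt{n}}{c}-3\sqrt{n}\right)^{2}/32\right]=1-4\exp\left[-(1/c-3)^{2}\,n/32\right],\]
all eigenvalues of $B$ lie in $[1-1,1+1]=[0,2]$; in particular they are all nonnegative, so $B$ is positive semi-definite, and the displayed probability is exactly the claimed bound.

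There is essentially no obstacle to speak of: the whole argument is the substitution $k=n$, $\delta=1$ into Corollary~\ref{modelB} together with the trivial observation that nonnegativity of all eigenvalues is equivalent to positive semi-definiteness. The only minor point to verify is that $\delta=1$ is a legitimate input to Corollary~\ref{modelB}; it is, because that corollary is stated for every $\delta$ satisfying $\delta\sqrt{n}\geq 3c\sqrt{k}$, with no upper restriction on $\delta$.
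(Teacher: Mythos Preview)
Your proof is correct and is essentially the same as the paper's own proof, which simply reads ``Set $n=k$ and $\delta=1$ in Corollary~\ref{modelB}.'' You have merely spelled out the verification of the hypotheses and the eigenvalue-nonnegativity observation that the paper leaves implicit.
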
 
\begin{proof}
Set $n=k$ and $\delta=1$ in Corollary~\ref{modelB}.
\end{proof}
In the last result of this subsection we consider again the 
model $B=\Id_n + cA/\sqrt{n}$. 
Given a $n \times n$ matrix $M$ 
and two subsets $S,T \subseteq \{1,\ldots,n\}$ of size $k$, 
let us denote by $M_{S,T}$ the $k \times k$ sub-matrix made up of all entries
of $M$ of row number in $S$ and column number in $T$.
\begin{corollary} \label{modelB''}
Consider the random matrix 
$B=\Id_n + cA/\sqrt{n}$ where $A$ is drawn from
the uniform distribution on the set $n \times n$ symmetric matrices
with null diagonal entries and $\pm 1$ off-diagonal entries.

If $n \geq k \geq k_0$, then with probability at least 
 $$1-4\exp\left[k \ln (ne/k)-( \frac{\delta\sqrt{n}}{c} 
-3\sqrt{k})^2/32\right]$$
the submatrices $B_{S,S}$ have all their eigenvalues in the interval 
$[1-\delta,1+\delta]$ for all subsets $S\subseteq \{1,\ldots,n\}$ of size $k$.
\end{corollary}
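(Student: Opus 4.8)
The plan is to combine Corollary~\ref{modelB} with a union bound over all $\binom{n}{k}$ choices of the subset $S$. The key observation is that for each fixed $S$, the submatrix $B_{S,S}$ is itself distributed exactly as a matrix from Model~B of dimension $k \times k$: its diagonal entries are $1$, and its strictly-upper-triangular entries are $c \cdot a_{ij}/\sqrt{n}$ for $i<j$ in $S$, where these $a_{ij}$ are independent symmetric Bernoulli random variables (a sub-collection of the entries of $A$). Note that the scaling is still by $\sqrt{n}$, not $\sqrt{k}$, because $n$ is the external parameter in the definition of $B$; this is why the hypothesis $\delta\sqrt{n} \geq 3c\sqrt{k}$ (rather than something involving only $k$) is the right one to invoke.

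Concretely, first I would fix a subset $S \subseteq \{1,\ldots,n\}$ of size $k$ and apply Corollary~\ref{modelB} to $B_{S,S}$: since $k \geq k_0$ and $\delta\sqrt{n} \geq 3c\sqrt{k}$ by hypothesis, the eigenvalues of $B_{S,S}$ all lie in $[1-\delta, 1+\delta]$ except with probability at most $4\exp[-(\delta\sqrt{n}/c - 3\sqrt{k})^2/32]$. Then I would take a union bound over the $\binom{n}{k}$ subsets $S$, using the standard estimate $\binom{n}{k} \leq (ne/k)^k = \exp[k \ln(ne/k)]$. This gives a failure probability of at most
\[
\binom{n}{k} \cdot 4\exp\left[-\left(\frac{\delta\sqrt{n}}{c} - 3\sqrt{k}\right)^2/32\right] \leq 4\exp\left[k\ln(ne/k) - \left(\frac{\delta\sqrt{n}}{c} - 3\sqrt{k}\right)^2/32\right],
\]
which is exactly the bound claimed in the statement. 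The complementary event is precisely that every $B_{S,S}$ has all eigenvalues in $[1-\delta,1+\delta]$.

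There is essentially no hard part here: the proof is a one-line union bound once one recognizes that each $B_{S,S}$ inherits the Model~B structure. The only point requiring a little care is to verify that the off-diagonal entries of $B_{S,S}$ are genuinely independent symmetric Bernoulli variables (up to the fixed scaling $c/\sqrt{n}$) — this is immediate since they form a subset of the independent entries $\{a_{ij} : i<j\}$ of $A$ — and to be careful that the correct normalization $\sqrt{n}$ (the ambient dimension) is used throughout rather than $\sqrt{k}$. I would also remark that the hypothesis $\delta\sqrt{n} \geq 3c\sqrt{k}$ ensures the exponent $(\delta\sqrt{n}/c - 3\sqrt{k})^2/32$ is well-defined and nonnegative, so the bound from Corollary~\ref{modelB} applies verbatim for every $S$.
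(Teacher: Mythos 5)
Your proof is correct and takes exactly the same route as the paper: apply Corollary~\ref{modelB} to each fixed $k\times k$ principal submatrix $B_{S,S}$ (which retains the Model~B structure with the ambient $\sqrt n$ normalization), then take a union bound over the $\binom{n}{k}\le (ne/k)^k$ subsets. Your additional remarks about the $\sqrt{n}$ versus $\sqrt{k}$ normalization and the sign of the exponent are sensible sanity checks but not points where the paper's argument differs.
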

\begin{proof}
By Corollary~\ref{modelB}, for each fixed $S$ 
matrix $B_{S,S}$ has an eigenvalue outside of the interval 
$[1-\delta,1+\delta]$ with probability at most
$4\exp[-( \frac{\delta\sqrt{n}}{c} -3\sqrt{k})^2/32].$
The result follows by taking a union bound over the 
$\binom{n}{k} \leq (ne/k)^k$ subsets of size $k$.
\end{proof}
%
\subsection{Model C} \label{sectionC}
%
In Corollaries~\ref{modelB'} and~\ref{modelB''}
 we considered the following random model for $B$:
set $B = \Id_n + c A/\sqrt{n}$, where $A$ is chosen from the uniform distribution on the set $S_n$ of all symmetric matrices with null diagonal entries
and $\pm 1$ off-diagonal entries.
If $B$ is positive semi-definite, we can find by Cholesky decomposition 
a $n \times n$
matrix $C$ such that $C^TC=B$.
If $B$ is not positive semi-definite, we set $C=0$.
This is the random model for $C$ that we study in this subsection.
\begin{proposition} \label{modelC}
Assume that $n \geq k \geq k_0$ 
and that 
$3c < \min(1, \delta\sqrt{n}/\sqrt{k})$.
With probability at least
 $$1-4\exp\left[k \ln (ne/k)-(\frac{\delta\sqrt{n}}{c} 
-3\sqrt{k})^2/32\right]-4\exp[-(1/c -3)^2 n/32],$$
$C$ satisfies the RIP of order $k$ with parameter $\delta$.
\end{proposition}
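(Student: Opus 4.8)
The plan is to combine the two events already isolated in Corollaries~\ref{modelB'} and~\ref{modelB''}. Write $B = \Id_n + cA/\sqrt{n}$ with $A$ uniform on $S_n$, and let $C$ be its Cholesky factor ($C=0$ if $B$ is not positive semi-definite). First I would invoke Corollary~\ref{modelB'}: since $3c<1$, the matrix $B$ is positive semi-definite — and hence $C^TC=B$ — except with probability at most $4\exp[-(1/c-3)^2 n/32]$. On this good event the Cholesky factor is a genuine $n\times n$ matrix with $C^TC=B$.

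Next I would observe the key algebraic identity: for any subset $S\subseteq\{1,\ldots,n\}$ of size $k$ and any vector $x$ supported on $S$, we have $\|Cx\|^2 = x^TC^TCx = x^TBx = x^T B_{S,S}\, x_S$, where $x_S$ is the restriction of $x$ to the coordinates in $S$. So the RIP constant of order $k$ for $C$ is controlled by how far the eigenvalues of the principal submatrices $B_{S,S}$ are from $1$: if every $B_{S,S}$ (over all $|S|=k$) has all eigenvalues in $[1-\delta,1+\delta]$, then $(1-\delta)\|x\|^2\le\|Cx\|^2\le(1+\delta)\|x\|^2$ for every $k$-sparse $x$, which is exactly the RIP of order $k$ with parameter $\delta$. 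This is where the hypothesis $3c<\delta\sqrt{n}/\sqrt{k}$ is needed, as it is precisely the condition $\delta\sqrt{n}\ge 3c\sqrt{k}$ under which Corollary~\ref{modelB''} applies.

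Then I would apply Corollary~\ref{modelB''}: with probability at least $1-4\exp[k\ln(ne/k)-(\delta\sqrt{n}/c-3\sqrt{k})^2/32]$, all the submatrices $B_{S,S}$ have their eigenvalues in $[1-\delta,1+\delta]$. Finally, a union bound over the two bad events (Cholesky factorization fails, or some $B_{S,S}$ has a stray eigenvalue) gives the claimed failure probability $4\exp[k\ln(ne/k)-(\delta\sqrt{n}/c-3\sqrt{k})^2/32]+4\exp[-(1/c-3)^2 n/32]$, and on the complementary event $C$ satisfies the RIP of order $k$ with parameter $\delta$.

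The only mildly delicate point — and the step I'd treat most carefully — is the reduction from "RIP of $C$" to "eigenvalues of the $B_{S,S}$", i.e.\ checking that it genuinely suffices to look at principal submatrices rather than all $k$-dimensional subspaces: the point is that a $k$-sparse vector lives in a coordinate subspace, so $x^TBx$ only involves the corresponding $B_{S,S}$. Everything else is assembling Corollaries~\ref{modelB'} and~\ref{modelB''} and a union bound; there is no new probabilistic estimate to prove here.
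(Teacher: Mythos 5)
Your proof is correct and follows essentially the same route as the paper's: invoke Corollary~\ref{modelB'} to bound the probability that the Cholesky factorization fails, reduce the RIP of $C$ to eigenvalue bounds on the principal $k\times k$ submatrices of $B=C^TC$, control those via Corollary~\ref{modelB''}, and union-bound the two bad events. The only difference is that you spell out the quadratic-form identity $\norm{Cx}^2 = x_S^T B_{S,S}\, x_S$ explicitly, which the paper leaves implicit in the phrase ``using the notation of Corollary~\ref{modelB''}.''
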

\begin{proof}
If $B = \Id_n + c A/\sqrt{n}$ is not positive semi-definite then $C=0$
and this matrix obviously does not satisfy the RIP.
By Corollary~\ref{modelB'}, $B$ can fail to be positive semi-definite
with probability at most $4\exp[-(1/c -3)^2 n/32].$
If $B$ is positive semi-definite then $C^TC=B$.
Using the notation of Corollary~\ref{modelB''}, 
matrix $C$ satisfies the RIP of order $k$ with parameter $\delta$ 
if for all subsets $S$ of size $k$,
the eigenvalues of the $k \times k$ matrices $(C^TC)_{S,S}$
all lie in the interval $[1-\delta,1+\delta]$.
Since $C^TC=B$, by Corollary~\ref{modelB''} this can happen with probability 
at most $4\exp[k \ln (ne/k)-( \frac{\delta\sqrt{n}}{c} 
-3\sqrt{k})^2/32].$
\end{proof}

\section{Large Cliques and the Restricted Isometry Property}
\label{hard_square}
%
In this section we show 
(in Theorems~\ref{1/3th}, \ref{point6th} 
and more generally in Theorem~\ref{highk}) 
that RIP parameters are hard to approximate even for square matrices. 
We establish connections between hidden clique problems
and the RIP thanks to a generic reduction 
which we call the {\em Cholesky reduction}.
This reduction maps a graph $G$ on $n$ vertices to a $n \times n$ matrix
$C(G)$. Let $A$ be the signed adjacency matrix of $G$: we have
$a_{ii}=0$ and for $i \neq j$, $a_{ij}=1$ if $ij \in E$; 
$a_{ij}=-1$ if $ij {\not \in} E$.
We construct $C=C(G)$ from $A$ 
using the procedure described in Section~\ref{sectionC}.
That is, we first compute $B = \Id_n + c A/\sqrt{n}$.
Here $c$ is some absolute constant smaller for $1/3$, for instance 
$c=0.3$.
If $B$ is not positive semi-definite, we set $C=0$. Otherwise, we find
by Cholesky decomposition a matrix $C$ such that $C^TC=B$.

For suitable values of $k$, $C(G)$ satisfies the RIP of order $k$ for 
most graphs $G$. This was made precise in Proposition~\ref{modelC}.
On the other hand, if $G$ has a $k$-clique then 
$C(G)$ cannot satisfy the RIP of order $k$ for a small value of the
parameter $\delta$. In order to show this, we first need a simple lemma.
\begin{lemma} \label{expansion}
Let $A$ be the signed adjacency matrix of a graph $G$.
If $G$ has a clique of size $k$ 
then there is a unit vector supported by $k$ basis vectors such that 
$x^T A x =  k-1$.
\end{lemma}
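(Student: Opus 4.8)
The plan is to exhibit the vector explicitly. Suppose $G$ has a clique on the vertex set $S=\{i_1,\ldots,i_k\}$, so that for every pair of distinct vertices $i,j\in S$ we have $ij\in E$ and hence $a_{ij}=1$. First I would take $x$ to be the normalized indicator of $S$: set $x_i=1/\sqrt{k}$ for $i\in S$ and $x_i=0$ otherwise. Then $x$ is a unit vector supported by the $k$ basis vectors indexed by $S$, as required.

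Next I would compute the quadratic form directly. Writing $x^T A x = \sum_{i,j} a_{ij} x_i x_j$ and using that the only nonzero contributions come from $i,j\in S$, we get $x^T A x = \frac{1}{k}\sum_{i,j\in S} a_{ij}$. The diagonal terms vanish since $a_{ii}=0$, and each of the $k(k-1)$ ordered pairs of distinct vertices in $S$ contributes $a_{ij}=1$ because $S$ is a clique. Hence $x^T A x = \frac{1}{k}\cdot k(k-1) = k-1$, which is exactly the claimed value.

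This proof is essentially a one-line computation, so there is no real obstacle; the only thing to be careful about is the sign convention for $A$ (off-diagonal entries are $+1$ on edges and $-1$ on non-edges, diagonal entries $0$), and the fact that a clique forces all relevant off-diagonal entries to be $+1$. If one wanted the slightly stronger geometric statement that could be useful later — namely that this same $x$ gives $x^T B x = 1 + c(k-1)/\sqrt{n}$ for the matrix $B=\Id_n + cA/\sqrt{n}$, so that $\|Cx\|^2 = x^T B x$ is large — it would follow immediately by linearity from the computation above, but for the lemma as stated the displayed identity $x^T A x = k-1$ is all that is needed.
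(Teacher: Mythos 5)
Your proof is correct and is exactly the paper's argument: the paper also takes $x_i = 1/\sqrt{k}$ on the clique and $0$ elsewhere, and leaves the one-line computation implicit. You have simply spelled out the arithmetic, which is sound.
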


\begin{proof}
Let $H$ be the $k$-clique.
Here is a suitable vector: 
set $x_i =1/\sqrt{k}$ if $i \in H$ and $x_i=0$ otherwise.
\end{proof}

\begin{proposition} \label{clique}
If $G$ has a clique of size $k$ and $\delta < c(k-1)/\sqrt{n}$
then
$C(G)$ does not satisfy the RIP of order $k$ with parameter $\delta$.
\end{proposition}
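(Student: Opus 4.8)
The plan is to derive a contradiction from the assumption that $C(G)$ satisfies the RIP of order $k$ with parameter $\delta < c(k-1)/\sqrt{n}$. First I would note that if $G$ contains a $k$-clique then $B = \Id_n + cA/\sqrt{n}$ is positive semi-definite, so $C = C(G)$ is the genuine Cholesky factor with $C^TC = B$ (if $B$ were not PSD we would have $C = 0$, which trivially fails the RIP, so that case is immediate). Hence for any vector $x$ we have $\|C x\|^2 = x^T C^T C x = x^T B x$.

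Next I would invoke Lemma~\ref{expansion}: since $G$ has a clique of size $k$, there is a unit vector $x$ supported on $k$ basis vectors with $x^T A x = k-1$. For this particular $x$ we compute
\[
\|C x\|^2 = x^T B x = x^T\Id_n x + \frac{c}{\sqrt{n}} x^T A x = \|x\|^2 + \frac{c(k-1)}{\sqrt{n}} = 1 + \frac{c(k-1)}{\sqrt{n}}.
\]
Because $x$ is $k$-sparse and a unit vector, the RIP of order $k$ with parameter $\delta$ would force $\|Cx\|^2 \le (1+\delta)\|x\|^2 = 1+\delta$, i.e. $1 + c(k-1)/\sqrt{n} \le 1 + \delta$, which gives $\delta \ge c(k-1)/\sqrt{n}$. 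This contradicts the hypothesis $\delta < c(k-1)/\sqrt{n}$, completing the proof.

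There is no real obstacle here; the argument is essentially a one-line computation once Lemma~\ref{expansion} and the identity $\|Cx\|^2 = x^T B x$ are in place. The only point requiring a small amount of care is confirming that the clique hypothesis guarantees $B$ is PSD so that $C$ is the actual Cholesky factor rather than the zero matrix — but this is automatic, since $B = C^TC$ whenever the Cholesky decomposition is performed, and the $C = 0$ fallback case fails the RIP trivially (a $k$-sparse unit vector $x$ has $\|Cx\|^2 = 0 < 1-\delta$).
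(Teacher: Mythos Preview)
Your proof is correct and essentially identical to the paper's argument: split on whether $B$ is positive semi-definite, dispose of the $C=0$ case trivially, and in the main case use the $k$-sparse unit vector from Lemma~\ref{expansion} to compute $\|Cx\|^2 = 1 + c(k-1)/\sqrt{n} > 1+\delta$. One small correction in your write-up: the clique hypothesis does \emph{not} by itself guarantee that $B$ is positive semi-definite (your phrasing ``if $G$ contains a $k$-clique then $B$ is positive semi-definite'' and ``the clique hypothesis guarantees $B$ is PSD'' suggests it does), but as you yourself note in the parenthetical and final paragraph, this is irrelevant since the $C=0$ fallback fails the RIP trivially --- which is exactly how the paper handles it.
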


\begin{proof}
If $B$ is not semi-definite positive, $C(G)=0$ does not satisfy the RIP.
Otherwise $C^TC=B$. Let $x$ be the vector of Lemma~\ref{expansion}.
We have $||Cx||^2=x^TC^TCx=x^TBx=1+cx^TAx/\sqrt{n} > 1+\delta$.
\end{proof}


We can now prove our first hardness results.
We first illustrate our method on two examples, and then prove a general
result at the end of this section.

\begin{theorem} \label{1/3th}
Assume hypothesis $(H_{1/6})$, that is: no polynomial time algorithm 
can certify that most graphs do not contain a clique of size $n^{1/3}$.
Then, no polynomial time algorithm
can distinguish a matrix with RIP parameter of order $n^{1/3}$ at most $n^{-1/4}$
from a matrix with RIP parameter of order $n^{1/3}$ at least $n^{-1/6}/4$.
\end{theorem}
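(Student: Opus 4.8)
The plan is to combine the Cholesky reduction with the two sources of estimates already in hand: Proposition~\ref{modelC} for the ``RIP holds'' side, and Proposition~\ref{clique} together with Lemma~\ref{expansion} for the ``RIP fails'' side. Set $k=n^{1/3}$ and keep the absolute constant $c$ (say $c=0.3$) fixed throughout. I would argue by contradiction: assume there is a polynomial time algorithm $\mathcal B$ that distinguishes matrices with RIP parameter of order $n^{1/3}$ at most $n^{-1/4}$ from those with RIP parameter of order $n^{1/3}$ at least $n^{-1/6}/4$. I will build from $\mathcal B$ a polynomial time algorithm $\mathcal A$ contradicting $(H_{1/6})$: given a graph $G$ on $n$ vertices, compute $B=\Id_n+cA/\sqrt n$ (with $A$ the signed adjacency matrix), test positive semidefiniteness, form $C=C(G)$ by Cholesky decomposition (or $C=0$ if $B\not\succeq 0$), run $\mathcal B$ on $C$, and output ``no clique'' exactly when $\mathcal B$ reports ``small RIP parameter''. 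All of these steps run in polynomial time.

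Next I would check the two required properties of $\mathcal A$. For the completeness side, suppose $G$ contains a clique of size $k=n^{1/3}$. Lemma~\ref{expansion} gives a unit vector $x$ supported on $k$ coordinates with $x^TAx=k-1$, and the computation in the proof of Proposition~\ref{clique} shows $\|Cx\|^2=1+c(k-1)/\sqrt n$; since $c(k-1)/\sqrt n \approx 0.3\, n^{1/3}/n^{1/2}=0.3\,n^{-1/6} > n^{-1/6}/4$ (for $n$ large), the RIP parameter of $C$ of order $k$ is at least $n^{-1/6}/4$, so $\mathcal B$ cannot answer ``small'', hence $\mathcal A$ does not (incorrectly) output ``no clique''. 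This is the one-sided guarantee $(H_{1/6})$ demands: $\mathcal A$ always says ``yes'' when the clique is present. For the soundness side, I invoke Proposition~\ref{modelC} with this $k$ and $\delta=n^{-1/4}$: I must verify the hypothesis $3c<\min(1,\delta\sqrt n/\sqrt k)$, which amounts to $3c<\delta\sqrt n/\sqrt k = n^{-1/4}\cdot n^{1/2}\cdot n^{-1/6}=n^{1/12}\to\infty$, so it holds for $n$ large; and I must check the failure probability $4\exp[k\ln(ne/k)-(\delta\sqrt n/c-3\sqrt k)^2/32]+4\exp[-(1/c-3)^2n/32]$ tends to $0$. Here $\delta\sqrt n/c \asymp n^{1/4}/c$ dominates $3\sqrt k=3n^{1/6}$, so the squared term is $\Theta(n^{1/2})$, while $k\ln(ne/k)=\Theta(n^{1/3}\log n)=o(n^{1/2})$; the second exponential is $e^{-\Omega(n)}$. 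Hence with probability tending to $1$ over $G(n,1/2)$ the matrix $C(G)$ has RIP parameter of order $k$ at most $\delta=n^{-1/4}$, so $\mathcal B$ answers ``small'' and $\mathcal A$ outputs ``no clique'' on most graphs. Thus $\mathcal A$ refutes $(H_{1/6})$, a contradiction.

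The main obstacle is bookkeeping the exponents so that the probability bound in Proposition~\ref{modelC} really decays: one needs the gap $n^{-1/4}$ in $\delta$ to be large enough relative to $n^{-1/6}/4$ on the clique side (so the two regimes are genuinely separated by a constant factor in the ``distinguish'' statement), yet small enough that $\delta\sqrt n/c - 3\sqrt k$ still grows like a positive power of $n$ dominating $\sqrt{k\log n}$. With $k=n^{1/3}$ both inequalities hold comfortably for all sufficiently large $n$, but I would state the chain of inequalities explicitly rather than hand-wave. A secondary point worth a sentence is that the two RIP thresholds in the theorem statement are indeed separated by a constant factor on the relevant instances — $n^{-1/6}/4$ versus $n^{-1/4}$ differ by the unbounded factor $n^{1/12}/4$, which is even better than needed — so the reduction shows hardness of approximation within any constant factor (and more) for this order, exactly as claimed.
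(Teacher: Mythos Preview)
Your proposal is correct and follows essentially the same approach as the paper: contrapositive via the Cholesky reduction $C(G)$, Proposition~\ref{clique} for the clique side to get a RIP parameter at least $n^{-1/6}/4$, and Proposition~\ref{modelC} with $\delta=n^{-1/4}$ for the $G(n,1/2)$ side, checking $\delta\sqrt{n}/\sqrt{k}=n^{1/12}$. Your write-up is in fact more explicit than the paper's about verifying that the failure probability in Proposition~\ref{modelC} tends to zero (the paper simply asserts it), but the argument is the same.
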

\begin{proof}
We show the contrapositive: 
assuming the existing of a distinguishing algorithm $\cal A$, 
we construct an algorithm that contradicts hypothesis $(H_{1/6})$.
Fix a constant $c<1/3$, for instance $c=0.3$. 
On input $G$, this algorithm first contruct $C(G)$.

If $G$ contains a clique of size $k=n^{1/3}$ 
 then by Proposition~\ref{clique} the matrix $C(G)$
does not satisfy the RIP of order $k$ with parameter $c'n^{-1/6}$.
Here $c'<c$ is another constant (for $n$ large enough we can take $c'=1/4$).

We consider now the case where $G$ was drawn from the $G(n,1/2)$ distribution.
Set $\delta=n^{-1/4}$. We can apply Proposition~\ref{modelC} 
since $\delta \sqrt{n} /\sqrt{k}=n^{1/12} > 1 > 3c$.
This proposition shows that with probability approaching 1 as 
$n \rightarrow + \infty$, $C(G)$ satisfies the RIP of order $k$ with parameter
$\delta$. 

We can therefore call algorithm $\cal A$ to 
certify the absence of a clique of size~$n^{1/3}$.
More precisely, if $G$ contains a $k$-clique our algorithm always finds out.
On the other hand, if $G$ was drawn  from $G(n,1/2)$ our algorithm 
answers correctly with high probability.
\end{proof}

This theorem implies in particular than RIP parameters cannot be approximated
 within any constant factor.
We can obtain a similar result for an order $k>\sqrt{n}$ under the
same hypothesis. 
This is possible essentially because a matrix that doesn't satisfy 
the RIP for a given order $k$ cannot satisfy the RIP for any order $k'>k$.
\begin{theorem} \label{point6th}
Assume Hypothesis~$(H_{1/6})$ as in the previous theorem.
Then no polynomial time algorithm
can distinguish a matrix with RIP parameter of order $n^{0.6}$ 
at most $n^{-0.19}$
from a matrix with RIP parameter of order $n^{0.6}$ at least $n^{-1/6}/4$.
\end{theorem}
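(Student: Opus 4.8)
The plan is to mimic the proof of Theorem~\ref{1/3th}, but to exploit the monotonicity of the RIP in the order $k$ in order to push the statement to an order $k > \sqrt{n}$. The key point, already hinted at in the paragraph preceding the statement, is that if a matrix fails the RIP of order $k$ with parameter $\delta$, then it also fails the RIP of any order $k' \geq k$ with the same parameter $\delta$ (a $k$-sparse witness is also $k'$-sparse). So the ``large clique'' side of the reduction still produces a matrix that badly fails the RIP of order $n^{0.6}$, even though the clique we plant has size only $n^{1/3}$.

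Concretely, I would again argue the contrapositive. Assume a polynomial time algorithm $\cal A$ distinguishing ``RIP parameter of order $n^{0.6}$ at most $n^{-0.19}$'' from ``RIP parameter of order $n^{0.6}$ at least $n^{-1/6}/4$''. Given a graph $G$ on $n$ vertices, form $C(G)$ as in the Cholesky reduction with $c = 0.3$. If $G$ contains a clique of size $k = n^{1/3}$, then by Proposition~\ref{clique} (with $\delta = c'n^{-1/6}$, $c' = 1/4$ for $n$ large, using $c(k-1)/\sqrt{n} \sim c\,n^{1/3}/\sqrt{n} = c\, n^{-1/6}$) the matrix $C(G)$ fails the RIP of order $n^{1/3}$ with parameter $n^{-1/6}/4$; hence it also fails the RIP of order $n^{0.6} \geq n^{1/3}$ with that parameter, so its RIP parameter of order $n^{0.6}$ is at least $n^{-1/6}/4$. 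On the other side, if $G$ is drawn from $G(n,1/2)$, I would invoke Proposition~\ref{modelC} with the parameters $k = n^{0.6}$ and $\delta = n^{-0.19}$: one has to check the hypotheses $n \geq k \geq k_0$ and $3c < \min(1, \delta\sqrt{n}/\sqrt{k})$, and that the failure probability $4\exp[k\ln(ne/k) - (\delta\sqrt{n}/c - 3\sqrt{k})^2/32] + 4\exp[-(1/c-3)^2 n/32]$ tends to $0$. Here $\delta\sqrt{n}/\sqrt{k} = n^{-0.19+1/2-0.3} = n^{0.01} \to \infty$, so the first condition holds for large $n$; and $\delta\sqrt{n}/c = n^{0.31}/0.3$ dominates $3\sqrt{k} = 3n^{0.3}$, so the exponent $(\delta\sqrt{n}/c - 3\sqrt{k})^2/32$ grows like $n^{0.62}$, which beats $k\ln(ne/k) = O(n^{0.6}\log n)$. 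Thus with probability approaching $1$, $C(G)$ satisfies the RIP of order $n^{0.6}$ with parameter $n^{-0.19}$, so its RIP parameter of that order is at most $n^{-0.19}$. Running $\cal A$ on $C(G)$ then certifies the absence of an $n^{1/3}$-clique from most graphs, contradicting $(H_{1/6})$.

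The main thing to get right is the exponent bookkeeping in the probability bound from Proposition~\ref{modelC}: we need the square term coming from the tail bound to asymptotically dominate the union-bound entropy term $k\ln(ne/k)$, and this is where the choice $k = n^{0.6}$ together with $\delta = n^{-0.19}$ is calibrated. The gap of $0.02$ in the exponents (so that $2(0.5 - 0.19) = 0.62 > 0.6$) is exactly what makes the argument go through; a smaller $\delta$ or a larger $k$ would break it. Everything else is a direct transcription of the proof of Theorem~\ref{1/3th}, with the extra one-line observation about monotonicity of the RIP in the order, so I do not anticipate further obstacles.
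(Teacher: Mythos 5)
Your proposal is correct and follows essentially the same route as the paper: the same contrapositive reduction from $(H_{1/6})$ via the Cholesky map $C(G)$, the same use of RIP monotonicity in the order to lift the clique-side failure from order $n^{1/3}$ to $n^{0.6}$, and the same exponent bookkeeping ($n^{0.62}$ from the tail bound dominating $n^{0.6}\log n$ from the union bound) when invoking Proposition~\ref{modelC} with $k=n^{0.6}$ and $\delta=n^{-0.19}$. No gaps.
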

\begin{proof}
We proceed as in the proof of the previous theorem: 
assuming the existing of a distinguishing algorithm $\cal A$, 
we construct an algorithm that contradicts the hypothesis.

If $G$ contains a clique of size $n^{1/3}$ 
then we saw that $C(G)$
does not satisfy the RIP of order $n^{1/3}$ with parameter $n^{-1/6}/4$.
It is {\em a fortiori} the case that this matrix does not satisfy 
the RIP of order $k=n^{0.6}> n^{1/3}$ with parameter $n^{-1/6}/4$.

We consider now the case where $G$ was drawn from the $G(n,1/2)$ distribution.
Set $\delta=n^{-0.19}$. We can apply Proposition~\ref{modelC} 
since $\delta \sqrt{n} /\sqrt{k}= n^{0.01}>1 > 3c$.
Consider the argument of the first exponential
in the probability bound of Proposition~\ref{modelC}.
The positive term $k \ln (ne/k)$, which is of order $n^{0.6}\ln n$,
is dominated by the negative term 
$(\frac{\delta\sqrt{n}}{c} -3\sqrt{k})^2/32$,
which is of order $n^{0.62}$.
We conclude that with probability approaching 1 as 
$n \rightarrow + \infty$, $C(G)$ satisfies the RIP of order $k$ with parameter
$\delta$. 

We can therefore call algorithm $\cal A$ to 
certify the absence of a clique of size~$n^{1/3}$.
More precisely, if $G$ contains a clique of size $n^{1/3}$ 
our algorithm always finds out.
On the other hand, if $G$ was drawn  from $G(n,1/2)$ our algorithm 
answers correctly with high probability.
\end{proof}
More generally, we have the following result.
\begin{theorem} \label{highk}
Set $k=n^{(1-2\epsilon)(1-\epsilon)}$ where $\epsilon \in ]0,1/2[$.
Set also $\delta=n^{-5\epsilon/4+\epsilon^2/2}$.

Hypothesis $(H_{\epsilon})$ implies that 
no polynomial time algorithm
can distinguish a matrix with RIP parameter of order $k$ 
at most $\delta$
from a matrix with RIP parameter of order $k$ 
at least $n^{-\epsilon}/4$.

In particular, since $\delta=o(n^{-\epsilon}/4)$, it follows that 
no polynomial time algorithm can approximate the RIP parameter of order $k$
within any constant factor.
\end{theorem}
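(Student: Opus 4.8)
The plan is to argue by contraposition, following the template of Theorems~\ref{1/3th} and~\ref{point6th}. Suppose $\cal A$ is a polynomial time algorithm that distinguishes matrices whose RIP parameter of order $k$ is at most $\delta$ from those whose RIP parameter of order $k$ is at least $n^{-\epsilon}/4$. I will build a polynomial time algorithm contradicting $(H_\epsilon)$. Fix $c=0.3$ (any $c<1/3$ works). On input a graph $G$ on $n$ vertices, the algorithm forms the Cholesky reduction $C=C(G)$ from $B=\Id_n+cA/\sqrt n$, where $A$ is the signed adjacency matrix of $G$; it then runs $\cal A$ on $C$ and outputs ``no clique'' if $\cal A$ reports that the RIP parameter of order $k$ is at most $\delta$, and ``yes'' otherwise.

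For the completeness side, suppose $G$ has a clique of size $s:=n^{1/2-\epsilon}$. By Lemma~\ref{expansion} and Proposition~\ref{clique}, $C(G)$ fails the RIP of order $s$ with parameter $c(s-1)/\sqrt n=c(n^{-\epsilon}-n^{-1/2})$; since $\epsilon<1/2$ this quantity is asymptotic to $cn^{-\epsilon}$ and hence exceeds $n^{-\epsilon}/4$ for $n$ large (as $c>1/4$), so $C(G)$ fails the RIP of order $s$ with parameter $n^{-\epsilon}/4$. A one-line computation shows $k=n^{(1-2\epsilon)(1-\epsilon)}\ge n^{1/2-\epsilon}=s$ for all $\epsilon$ — it reduces to $\tfrac12(1-2\epsilon)^2\ge 0$ — so a witnessing $s$-sparse vector is also $k$-sparse, and $C(G)$ a fortiori fails the RIP of order $k$ with parameter $n^{-\epsilon}/4$. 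Thus $\cal A$ must report ``at least $n^{-\epsilon}/4$'', and the algorithm outputs ``yes'' with certainty.

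For the soundness side, suppose $G$ is drawn from $G(n,1/2)$; then $A$ is distributed as in Proposition~\ref{modelC}, which I apply with $\delta=n^{-5\epsilon/4+\epsilon^2/2}$. Its hypotheses hold for $n$ large: $k\le n$, $k\to\infty$, $3c<1$, and a direct computation gives $\delta\sqrt n/\sqrt k=n^{\epsilon(1-2\epsilon)/4}\to\infty$, so $3c<\delta\sqrt n/\sqrt k$. In the probability bound the term $4\exp[-(1/c-3)^2n/32]$ vanishes, and for the other exponential I compare polynomial exponents: the positive term $k\ln(ne/k)$ is $\Theta(n^{(1-2\epsilon)(1-\epsilon)}\ln n)$, while $(\delta\sqrt n/c-3\sqrt k)^2$ is $\Theta(\delta^2 n)=\Theta(n^{1-5\epsilon/2+\epsilon^2})$, and the difference of the two exponents equals $\epsilon(1-2\epsilon)/2>0$, so the negative term dominates and the probability tends to $1$. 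Hence with high probability $C(G)$ satisfies the RIP of order $k$ with parameter $\delta$, $\cal A$ reports ``at most $\delta$'', and the algorithm outputs ``no clique''. This contradicts $(H_\epsilon)$. The final assertion is then immediate: $(n^{-\epsilon}/4)/\delta=\tfrac14 n^{\epsilon(1-2\epsilon)/4}\to\infty$, so any constant-factor approximation of the RIP parameter of order $k$ would in particular distinguish the two regimes.

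I expect the only genuine work, and hence the main ``obstacle'', to be the exponent bookkeeping that pins down $\delta$: it must be small enough for Proposition~\ref{modelC} to force the RIP on the random instance (the exponent comparison $1-5\epsilon/2+\epsilon^2 > (1-2\epsilon)(1-\epsilon)$ above), yet of the form $o(n^{-\epsilon})$ so that the two regimes stay separated by a growing factor, and the single value $n^{-5\epsilon/4+\epsilon^2/2}$ is what threads this needle uniformly for $\epsilon\in\,]0,1/2[$. I would also note in passing that $\epsilon\mapsto(1-2\epsilon)(1-\epsilon)$ is a continuous decreasing bijection of $]0,1/2[$ onto $]0,1[$, so this theorem covers orders $k=n^{\alpha}$ for every constant $\alpha\in\,]0,1[$, as claimed in the introduction.
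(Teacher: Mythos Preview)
Your proof is correct and follows essentially the same route as the paper's: the Cholesky reduction $C(G)$, Proposition~\ref{clique} for completeness (a clique forces failure of the RIP at order $n^{1/2-\epsilon}$, hence at order $k$), and Proposition~\ref{modelC} for soundness via the exponent comparison $1-5\epsilon/2+\epsilon^2 > (1-2\epsilon)(1-\epsilon)$. Your presentation is in fact a bit more explicit (the verification that $k\ge n^{1/2-\epsilon}$ reduces to $\tfrac12(1-2\epsilon)^2\ge 0$, and the computation $(n^{-\epsilon}/4)/\delta=\tfrac14 n^{\epsilon(1-2\epsilon)/4}\to\infty$), and your closing remark on the bijection $\epsilon\mapsto(1-2\epsilon)(1-\epsilon)$ is exactly the content of the paper's Remark following the theorem.
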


\begin{remark}
The exponent $\alpha=(1-2\epsilon)(1-\epsilon)$ ranges over $]0,1[$ 
as $\epsilon$ ranges over the interval $]0,1/2[$.
This theorem therefore shows that for {\em any} exponent $\alpha \in ]0,1[$,
the RIP parameter of order $k=n^{\alpha}$ cannot be approximated 
within any constant factor in polynomial time.
\end{remark}

\begin{proof}[Proof of Theorem~\ref{highk}]
That $\delta=o(n^{-\epsilon}/4)$ follows from the inequality
$-5\epsilon/4+\epsilon^2/2 < -\epsilon/4$. This inequality 
holds true for all
$\epsilon \in ]0,2[$, and in particular for all $\epsilon$ in the range
$]0,1/2[$ that is of interest here.

We now prove the main part  of the theorem.
Assuming the existence of a distinguishing algorithm $\cal A$, 
we construct again an algorithm that refutes hypothesis $(H_{\epsilon})$.

We set as usual $c=0.3$.
If $G$ contains a clique of size $n^{1/2-{\epsilon}}$ 
then by Proposition~\ref{clique} $C(G)$ does not satisfy the RIP of order 
$n^{1/2-{\epsilon}}$ with parameter $n^{-\epsilon}/4$.
It is {\em a fortiori} the case that this matrix does not satisfy 
the RIP of order $k=n^{(1-2\epsilon)(1-\epsilon)}> n^{(1-2\epsilon)/2}$ 
for the same parameter value.

Consider now the case where $G$ is drawn from the $G(n,1/2)$ distribution.
We can apply Proposition~\ref{modelC} 
since $\delta \sqrt{n} /\sqrt{k}= n^{\frac{\epsilon}{4}(1-2\epsilon)}>1 > 3c$.
Consider the argument of the first exponential term
in the probability bound of Proposition~\ref{modelC}.
The positive term $k \ln (ne/k)$, which is of order 
$k \ln n = n^{(1-2\epsilon)(1-\epsilon)} \ln n$, 
is dominated by the negative term 
$(\frac{\delta\sqrt{n}}{c} -3\sqrt{k})^2/32$, which is of order 
$\delta^2 n = n^{1-5\epsilon/2+\epsilon^2}$.
Indeed, the difference in the two exponents is 
$$1-\frac{5\epsilon}{2}+\epsilon^2-(1-2\epsilon)(1-\epsilon)=
\frac{\epsilon}{2}-\epsilon^2>0.$$
As a result, with probability approaching 1 as $n \rightarrow +\infty$, 
$C(G)$ satisfies the RIP of order $k$ with parameter $\delta$.
We can therefore refute hypothesis $(H_{\epsilon})$ by running algorithm $\cal A$ on input $C(G)$.
\end{proof}

\section{Hardness for Rectangular Matrices}
\label{hard_rect}

In this section we show that the RIP parameters of rectangular
matrices are hard to approximate. This is the case of interest
in compressed sensing. 
In a sense this was already done in Section~\ref{hard_square}:
we have shown that the special case of square matrices is already hard.
Nevertheless, it is of interest to know that the problems remains hard
for {\em strictly rectangular} matrices. This is what we do in this 
section. Proofs are essentially by reduction from the square case.
We begin with a simple lemma.
\begin{lemma} \label{diag_rip}
Consider a matrix $\Phi$ with the 
block structure
$$\Phi=\left(
\begin{array}{cc}
A & 0\\  0 & B
\end{array}\right),$$
where $A$ and $B$ both have at least $k$ columns.
This matrix satisfies the RIP of order $k$ with parameter $\delta$ 
if and only if the same is true for both $A$ and $B$.
\end{lemma}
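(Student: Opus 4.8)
The plan is to prove the two directions of the equivalence separately, using the fact that a vector supported on $k$ coordinates decomposes according to the block structure of $\Phi$.

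First I would set up notation: write a vector $x \in \cc^{N}$ (where $N$ is the total number of columns of $\Phi$) as $x = (y, z)$ where $y$ ranges over the columns of $A$ and $z$ over the columns of $B$. Then $\Phi x = (Ay, Bz)$, so $\norm{\Phi x}^2 = \norm{Ay}^2 + \norm{Bz}^2$ and $\norm{x}^2 = \norm{y}^2 + \norm{z}^2$.

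For the ``if'' direction, suppose $A$ and $B$ both satisfy the RIP of order $k$ with parameter $\delta$. Let $x = (y,z)$ be $k$-sparse. The crucial observation is that $y$ and $z$ are each $k$-sparse (they have at most as many nonzero entries as $x$), so $(1-\delta)\norm{y}^2 \le \norm{Ay}^2 \le (1+\delta)\norm{y}^2$ and likewise for $z$. Adding the two chains of inequalities gives $(1-\delta)\norm{x}^2 \le \norm{\Phi x}^2 \le (1+\delta)\norm{x}^2$, which is the RIP of order $k$ for $\Phi$.

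For the ``only if'' direction, suppose $\Phi$ satisfies the RIP of order $k$ with parameter $\delta$. Given any $k$-sparse $y$ in the column space indexing of $A$, apply the RIP of $\Phi$ to the $k$-sparse vector $x = (y, 0)$: since $\norm{\Phi x}^2 = \norm{Ay}^2$ and $\norm{x}^2 = \norm{y}^2$, we get the RIP bound for $A$. The hypothesis that $B$ has at least $k$ columns guarantees that this embedding is available for every $k$-sparse $y$ (i.e.\ we never run out of room), though in fact padding with zeros needs no extra columns; the symmetric argument with $x = (0,z)$ handles $B$. This completes both directions. I do not expect any real obstacle here — the only point requiring a word of care is that a restriction of a $k$-sparse vector is still $k$-sparse, which is immediate.
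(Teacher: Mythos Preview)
Your proof is correct and follows essentially the same approach as the paper's: decompose $x=(y,z)$, use $\|\Phi x\|^2=\|Ay\|^2+\|Bz\|^2$, and for one direction set one block to zero while for the other add the two RIP inequalities after noting that each block of a $k$-sparse vector is itself $k$-sparse. The only minor quibble is your parenthetical about the column-count hypothesis: the assumption that $A$ and $B$ each have at least $k$ columns is not needed to ``make room'' for zero-padding, but rather to ensure that speaking of the RIP of order $k$ for each block is meaningful.
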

\begin{proof}
For an input vector $x$ with the corresponding block structure $x=(u\ v)$
we have $||x||^2=||u^2||+||v||^2$ and $||\Phi x||^2 = ||Au||^2 + ||Bv||^2.$
Therefore, if $\Phi$ satisfies the RIP of order $k$ with parameter $\delta$ 
then the same is true for $A$ (take $v=0$ and $u$ $k$-sparse).
The same argument applies also to $B$.

Conversely, assume that $A$ and $B$ satisfy the RIP of order $k$
 with parameter $\delta$. Let $x=(u\ v)$ be a $k$-sparse vector.
We have 
$||\Phi x||^2 - ||x||^2 = (||\Phi u||^2 - ||u||^2)+(||\Phi v||^2 - ||v||^2).$
Both $u$ and $v$ must be $k$-sparse,
so the first term is bounded in absolute value by $\delta ||u||^2$ and the
second one by $\delta ||v||^2$. 
The result follows since $||u||^2+||v||^2 = ||x||^2$.
\end{proof}

\begin{theorem} \label{hard_rec1}
There are absolute constants $\epsilon_0,\epsilon>0$ such that 
under hypothesis~$(H_{\epsilon})$ and the choice of parameters:
$$k=n^{\frac{1}{2}+\epsilon_0}, \delta= n^{-5\epsilon/4+\epsilon^2/2}$$ 
no polynomial time algorithm can distinguish a matrix
with RIP parameter of order $k$ at most $\delta$ from a matrix with RIP
parameter of order $k$ at least $n^{-\epsilon}/4$.

Moreover, polynomial-time distinction between these two cases remains
impossible even for matrices of size $2n \times (n+N)$ where
 $N=n^{1+\epsilon_0}$. As a result, for matrices of this size the RIP
parameter of order $k$ cannot be approximated in polynomial time
within any constant factor.
\end{theorem}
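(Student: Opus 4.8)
The plan is to reduce from the square-matrix hardness already established in Theorem~\ref{highk}, padding the square Cholesky matrix $C(G)$ with a block taken from a known RIP construction. On input a graph $G$ with $n$ vertices I would form the $n\times n$ matrix $C(G)$ exactly as in Section~\ref{hard_square} (same constant $c=0.3$), choose an $n\times N$ matrix $D$ with $N=n^{1+\epsilon_0}$ columns from the explicit constructions of~\cite{BDFKK,BDFKKb} (or a probabilistic one as in Theorem~\ref{randrip}), and output the $2n\times(n+N)$ block-diagonal matrix
\[\Phi(G)=\left(\begin{array}{cc} C(G) & 0 \\ 0 & D\end{array}\right).\]
By Lemma~\ref{diag_rip} the RIP parameter of $\Phi(G)$ of order $k$ equals the larger of the RIP parameters of $C(G)$ and of $D$ at that order, so the task splits into controlling those two numbers separately, with $k=n^{1/2+\epsilon_0}$ and $\delta=n^{-5\epsilon/4+\epsilon^2/2}$.

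On the ``clique'' side I would argue as in Theorem~\ref{highk}: if $G$ contains a clique of size $n^{1/2-\epsilon}$, then Proposition~\ref{clique} shows $C(G)$ fails the RIP of order $n^{1/2-\epsilon}$ with parameter $n^{-\epsilon}/4$, hence \emph{a fortiori} fails it at the larger order $k$, so by Lemma~\ref{diag_rip} the RIP parameter of $\Phi(G)$ of order $k$ exceeds $n^{-\epsilon}/4$, whatever $D$ is. On the ``random'' side, Proposition~\ref{modelC}, used as in the proof of Theorem~\ref{highk} together with monotonicity of the RIP parameter in the order, gives that with probability tending to $1$ the matrix $C(G)$ has RIP parameter of order $k$ at most $\delta$, provided $\epsilon_0<(1-2\epsilon)(1-\epsilon)-1/2$ so that $k$ stays within the range handled by that proposition. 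It then remains to check that $D$ itself has RIP parameter of order $k$ at most $\delta$; for this I would use that $D$ is RIP of some order $M=n^{1/2+\gamma}$ ($\gamma>0$ the exponent gain of the chosen construction, $M\le N$) with a constant parameter $\delta_0$, and invoke Theorem~\ref{m2k} to get that the RIP parameter of $D$ of order $k$ is at most $\delta_0(k-1)/(M-1)=O(n^{\epsilon_0-\gamma})$, which is $\le\delta$ as soon as $\epsilon_0-\gamma\le-5\epsilon/4+\epsilon^2/2$.

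Combining the two sides, a polynomial-time algorithm distinguishing $2n\times(n+N)$ matrices with RIP parameter of order $k$ at most $\delta$ from those with RIP parameter of order $k$ at least $n^{-\epsilon}/4$ would, composed with $G\mapsto\Phi(G)$, always certify a clique of size $n^{1/2-\epsilon}$ when $G$ has one and report its absence on most graphs drawn from $G(n,1/2)$, contradicting $(H_\epsilon)$; and since $\delta=o(n^{-\epsilon})$ the two parameter windows are separated by a super-constant factor, so the RIP parameter of order $k$ admits no constant-factor polynomial-time approximation. The final bookkeeping step is to verify that the constraints on the two constants are jointly satisfiable by honest positive constants: $\epsilon_0$ must be strictly smaller than both $\gamma$ and $(1-2\epsilon)(1-\epsilon)-1/2$, and $\epsilon$ must be small enough that $5\epsilon/4-\epsilon^2/2$ fits under the slack $\gamma-\epsilon_0$ and that Proposition~\ref{modelC} still applies at order $k$ with parameter $\delta$ — all of which hold once $\epsilon_0$ and $\epsilon$ are chosen small enough.

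I expect the main obstacle to be precisely this juggling of parameters around the padding block $D$: because the target $\delta$ is itself vanishing, $D$ cannot merely be RIP at order $k$ with a constant parameter but must be RIP at an order exceeding $k$ by a fixed power of $n$, which pins $\epsilon_0$ below the exponent gain of the best available explicit construction and then forces $\epsilon$ to be small accordingly. Using instead the probabilistic construction of Theorem~\ref{randrip}, where $M$ may be taken almost linear in $n$, relaxes the constraint on $\epsilon_0$ at the price of a reduction that refers to a fixed, not-uniformly-computable matrix $D$. Everything else — the block-diagonal accounting through Lemma~\ref{diag_rip}, the clique-side lower bound, and the random-side upper bound — is a direct reuse of the square-case machinery of Section~\ref{hard_square}.
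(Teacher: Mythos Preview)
Your overall strategy is exactly the paper's: build a block-diagonal matrix $\Phi(G)$ with $C(G)$ in one corner and a padding block $D$ in the other, then use Lemma~\ref{diag_rip} to reduce to controlling the two blocks separately. The clique side and the random-$C(G)$ side are handled correctly and match the paper.

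The gap is in your treatment of the padding block $D$. You assume $D$ has RIP of order $M=n^{1/2+\gamma}$ with a \emph{constant} parameter $\delta_0$, and then invoke Theorem~\ref{m2k} to conclude that its RIP parameter at the smaller order $k=n^{1/2+\epsilon_0}<M$ is at most $\delta_0(k-1)/(M-1)$. But Theorem~\ref{m2k} goes in the opposite direction: it upgrades RIP at a \emph{lower} order $m$ to RIP at a \emph{higher} order $k\ge m$, at the cost of multiplying the parameter by $(k-1)/(m-1)\ge 1$. It never produces a smaller parameter at a smaller order, and indeed the claimed inequality is false in general (the RIP parameter is monotone nondecreasing in the order, but there is no reason it should shrink proportionally as the order drops). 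So the crucial estimate $\mathrm{RIP}_k(D)\le\delta$ is unjustified, and with only a constant $\delta_0$ for $D$ there is no way to push $\mathrm{RIP}_k(D)$ below the vanishing threshold $\delta$.

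The fix is the one the paper uses: the construction of~\cite{BDFKK,BDFKKb} already gives, at order exactly $k=n^{1/2+\epsilon_0}$, a \emph{vanishing} RIP parameter $n^{-\epsilon_0}$ (not merely a constant). One then chooses $\epsilon$ small enough that $-5\epsilon/4+\epsilon^2/2\ge -\epsilon_0$, so that $\delta\ge n^{-\epsilon_0}$, and the padding block automatically satisfies the RIP of order $k$ with parameter $\delta$. No bootstrapping via Theorem~\ref{m2k} is needed (or possible) here. With this correction the rest of your argument goes through and coincides with the paper's proof.
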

The first part of the theorem follows from Theorem~\ref{highk}.
The point of Theorem~\ref{hard_rec1} is that it establishes hardness
of approximation for strictly rectangular matrices.
\begin{proof}[Proof of Theorem~\ref{hard_rec1}]
The claim on constant factor approximation follows as in Theorem~\ref{highk} 
from the relation 
$\delta=o(n^{-\epsilon}/4)$.
To prove the remainder of the theorem, 
we build on the proof of Theorem~\ref{highk}.
From a graph $G$ on $n$ vertices we construct the matrix
$$C'(G)=\left(
\begin{array}{cc}
C(G) & 0\\  0 & B_n
\end{array}\right)$$
where $C(G)$ is as in the previous section and $B_n$ is a matrix with
good restricted isometry properties. 
Its role is to ensure the rectangular format that we need for $C'(G)$.
Our specific choice for $B_n$ is the matrix constructed in~\cite{BDFKK,BDFKKb}.
It is of size $n \times N$ where $N=n^{1+\epsilon_0}$, and it satisfies the
RIP of order $n^{{\frac1{2}}+\epsilon_0}$ with parameter $n^{-\epsilon_0}$.
Moreover, $B_n$ can be constructed deterministically in time polynomial in $n$.
Note that $C'(G)$ is of size $2n \times (n+N)$ as required in the statement
of Theorem~\ref{hard_rec1}.

Choose $\epsilon$ so small that 
$(1-2\epsilon)(1-\epsilon) \geq \frac{1}{2}+\epsilon_0$ and 
${-5\epsilon/4+\epsilon^2/2} \geq -\epsilon_0.$
We thus have $\delta \geq n^{-\epsilon_0}$.
It then follows from Lemma~\ref{diag_rip}  that
$C'(G)$ satisfies the RIP of order $k$ with parameter $\delta$ if and only
if $C(G)$ does.

To complete the proof, 
let us assume that we have a distinguishing algorithm~$\cal A$ which
works for matrices of size $2n \times (n+N)$.
We use it to refute hypothesis $(H_{\epsilon})$.

If $G$ contains a clique of size $n^{1/2-{\epsilon}}$,
we saw in the proof of Theorem~\ref{highk} 
that $C(G)$ does not satisfy the RIP of order 
$n^{1/2-{\epsilon}}$ with parameter $n^{-\epsilon}/4$ 
(by Proposition~\ref{clique}).
It is {\em a fortiori} the case that this matrix does not satisfy 
the RIP of order $k=n^{\frac{1}{2}+\epsilon_0}$ 
for the same parameter value, and the same is true of $C'(G)$.

Consider now the case where $G$ is drawn from the $G(n,1/2)$ distribution.
We saw in the proof of Theorem~\ref{highk} 
that for most $G$, $C(G)$ satisfies the RIP of order 
$n^{(1-2\epsilon)(1-\epsilon)}$ with parameter $\delta$.
That order is at least as large as $k=n^{\frac{1}{2}+\epsilon_0}$,
so it is {\em a fortiori} the case that $C(G)$ satisfies the RIP of order 
$k$ with parameter $\delta$ for most $G$. As pointed out above, the same is
then true for $C'(G)$.
We can therefore refute hypothesis $(H_{\epsilon})$ 
by running algorithm $\cal A$ on~$C'(G)$.
\end{proof}
Theorem~\ref{hard_rec1} establishes hardness of approximation for an order
$k$ which is only slightly above  $n^{1/2}$. 
We can bring $k$ much closer to $n$, but for this we need a randomized version
of hypothesis $(H_{\epsilon})$:
\begin{hypothesis}[$\mathbf H_{\epsilon}'$]
There is no polynomial time {\em randomized} 
algorithm which, given as input a graph $G$
on $n$ vertices:
\begin{itemize}
\item always outputs ``yes'' 
if $G$ contains a clique of size $n^{\frac{1}{2}-\epsilon}$.

\item Outputs ``no'' with probability at least (say) 3/4 
on most graphs $G$ when $G$ is drawn from the uniform
distribution $G(n,1/2)$.
\end{itemize}
\end{hypothesis}
Note that the probability bound 3/4 in $H_{\epsilon}'$ refers to 
the {\em internal}
 coin tosses of the algorithm.
\begin{theorem} \label{hard_rec2}
Set $k=n^{(1-2\epsilon)(1-\epsilon)}$ where $\epsilon \in ]0,1/2[$.
Set also $\delta=n^{-5\epsilon/4+\epsilon^2/2}$.

Hypothesis $(H_{\epsilon}')$ implies that 
no polynomial time algorithm
can distinguish a matrix with RIP parameter of order $k$ 
at most $\delta$
from a matrix with RIP parameter of order $k$ 
at least $n^{-\epsilon}/4$. 

Moreover, polynomial-time distinction between these two cases remains
impossible even for matrices of size $2n \times 100n$.
Since $\delta=o(n^{-\epsilon}/4)$, it follows that 
for matrices of this size 
no polynomial time algorithm can approximate the RIP parameter of order $k$
within any constant factor.
\end{theorem}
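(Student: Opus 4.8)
The plan is to mimic the proof of Theorem~\ref{hard_rec1} but replace the deterministic padding matrix $B_n$ from~\cite{BDFKK,BDFKKb} by a \emph{random} subgaussian matrix, whose existence as an RIP matrix of the required dimensions is guaranteed by Theorem~\ref{randrip} (equivalently Theorem~64 of~\cite{Vershynin}). This is precisely why the hypothesis must be upgraded from $(H_\epsilon)$ to the randomized version $(H_\epsilon')$: sampling the padding block is a randomized step, and the padding block itself satisfies the RIP only with high probability rather than with certainty. First I would fix $\epsilon$ small enough that $(1-2\epsilon)(1-\epsilon) \geq$ the target exponent and $-5\epsilon/4 + \epsilon^2/2 \geq$ the target exponent bound, exactly as in Theorem~\ref{hard_rec1}, so that $\delta$ dominates whatever RIP parameter the padding block achieves.

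Next I would describe the construction. Given a graph $G$ on $n$ vertices, form $C(G)$ as in Section~\ref{hard_square}. Let $B_n$ now be an $n \times 99n$ matrix with i.i.d.\ symmetric Bernoulli entries, normalized by $1/\sqrt{n}$. By Theorem~\ref{randrip}, applied with $N = 99n$ and $m = k$ (and using that $k = n^{(1-2\epsilon)(1-\epsilon)}$ is comfortably below $n/\log n$ so the hypothesis $n \geq C\epsilon^{-2} m \log(eN/m)$ holds for the relevant $\delta$-scale once $\epsilon$ is fixed small), $B_n$ satisfies the RIP of order $k$ with parameter, say, $n^{-\epsilon_0}$ for a suitable $\epsilon_0 > 0$, with probability $1 - 2\exp(-c\epsilon^2 n)$, which tends to $1$. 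Then set
$$C'(G)=\left(\begin{array}{cc} C(G) & 0\\ 0 & B_n \end{array}\right),$$
a matrix of size $2n \times 100n$. By Lemma~\ref{diag_rip}, conditioned on $B_n$ being RIP of order $k$ with parameter $\le \delta$ (which holds w.h.p.\ by the choice of $\epsilon$), $C'(G)$ satisfies the RIP of order $k$ with parameter $\delta$ if and only if $C(G)$ does.

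The completeness and soundness analysis then transfers verbatim from the proof of Theorem~\ref{highk}. If $G$ has a clique of size $n^{1/2-\epsilon}$, then by Proposition~\ref{clique} $C(G)$ fails the RIP of order $n^{1/2-\epsilon}$, hence \emph{a fortiori} of order $k = n^{(1-2\epsilon)(1-\epsilon)} > n^{(1-2\epsilon)/2}$, with parameter $n^{-\epsilon}/4$, and the block structure of $C'(G)$ inherits this failure. If $G \sim G(n,1/2)$, then the proof of Theorem~\ref{highk} shows $C(G)$ is RIP of order $k$ with parameter $\delta$ w.h.p.; intersecting with the w.h.p.\ event that $B_n$ is good, $C'(G)$ is RIP of order $k$ with parameter $\delta$ w.h.p. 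A distinguishing algorithm $\cal A$ for matrices of size $2n \times 100n$, run on $C'(G)$, therefore gives a randomized procedure of the kind forbidden by $(H_\epsilon')$: always ``yes'' when a clique of size $n^{1/2-\epsilon}$ is present, ``no'' with probability $\ge 3/4$ on most $G(n,1/2)$ graphs (the probability coming from the randomness in $B_n$ together with, if needed, independent repetition to amplify past $3/4$). The last sentence of the statement, that constant-factor approximation is impossible, follows as before from $\delta = o(n^{-\epsilon}/4)$.

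The main obstacle, though a mild one, is bookkeeping the several high-probability events and the parameter constraints simultaneously: one must check that the range of admissible $\epsilon$ leaving $k$ below the Theorem~\ref{randrip} threshold $\asymp n / \log n$ is nonempty for every target exponent $\alpha \in ]0,1[$ (it is, since $(1-2\epsilon)(1-\epsilon) < 1$ strictly), and that the failure probability $2\exp(-c\epsilon^2 n)$ of the padding block is indeed $o(1)$ for the \emph{fixed} constant $\epsilon$ — which it is, but this is exactly the point where the argument genuinely needs the randomized hypothesis, since a deterministic $100n$-column RIP matrix of order $n^{\alpha}$ with $\alpha$ arbitrarily close to $1$ is not known to be constructible. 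A secondary subtlety is ensuring the ``no'' probability can be pushed above $3/4$: if $\cal A$ is deterministic the only randomness is in generating $B_n$, and since that event already has probability $1 - o(1)$, no amplification is needed; if $\cal A$ is itself randomized, one composes the two sources of randomness and, if necessary, repeats a constant number of times.
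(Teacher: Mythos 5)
Your construction matches the paper's: pad $C(G)$ with a random $n\times 99n$ Bernoulli matrix $B_n$, use Lemma~\ref{diag_rip} to transfer RIP properties, and observe that the randomness in sampling $B_n$ is exactly what forces the upgrade to $(H_{\epsilon}')$. The completeness/soundness split and the remark about when amplification is (un)necessary are also in line with the paper.

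One point is off, though, and it obscures the improvement Theorem~\ref{hard_rec2} makes over Theorem~\ref{hard_rec1}. You open by saying ``fix $\epsilon$ small enough \ldots exactly as in Theorem~\ref{hard_rec1}, so that $\delta$ dominates whatever RIP parameter the padding block achieves,'' and later say $B_n$ has RIP parameter ``$n^{-\epsilon_0}$ for a suitable $\epsilon_0>0$.'' That is the correct framing for the \emph{deterministic} padding of Theorem~\ref{hard_rec1}, where $\epsilon_0$ is a fixed constant handed to us by~\cite{BDFKK,BDFKKb} and $\epsilon$ must be shrunk to fit under it. But for Theorem~\ref{hard_rec2} the claim is for \emph{every} $\epsilon\in\left]0,1/2\right[$, and the whole point of switching to a random $B_n$ is that one can ask it for the RIP of order $k=n^{(1-2\epsilon)(1-\epsilon)}$ with the \emph{given} parameter $\delta=n^{-5\epsilon/4+\epsilon^2/2}$ directly. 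What needs verifying is that Theorem~\ref{randrip}'s requirement $n\gtrsim \delta^{-2}k\log(eN/k)$ holds and that the failure probability $2\exp(-c\delta^2 n)$ vanishes; both boil down to the exponent arithmetic $n\delta^2=n^{\epsilon^2-5\epsilon/2+1}$ with $\epsilon^2-5\epsilon/2+1=(2-\epsilon)(1/2-\epsilon)>0$, together with $\epsilon/2-\epsilon^2>0$ to dominate $k\log n$, for all $\epsilon\in\left]0,1/2\right[$. Your ``obstacle'' paragraph gestures at this (``$k$ comfortably below $n/\log n$'' and ``every target exponent $\alpha\in\left]0,1\right[$''), so the idea is there, but the explicit check is what actually lets $\epsilon$ range over the full interval and hence lets $k$ get arbitrarily close to $n$; without it you only recover the $k$ slightly above $\sqrt{n}$ already covered by Theorem~\ref{hard_rec1}.
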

\begin{proof}
As in the proof of Theorem~\ref{hard_rec1} we construct from a graph $G$
a matrix of the form
$$C'(G)=\left(
\begin{array}{cc}
C(G) & 0\\  0 & B_n
\end{array}\right).$$
For $B_n$, instead of of the deterministic construction from~\cite{BDFKK,BDFKKb} we will use a $n \times 99n$ random matrix given by Theorem~\ref{randrip}.
As before, we will certify that $G$  does
 not contain a clique of size $n^{\frac{1}{2}-\epsilon}$ if the hypothetical 
distinguishing algorithm $\cal A$ for matrices of size $2n \times 100 n$
accepts $C'(G)$. This will yield a contradiction
with Hypothesis~$(H_{\epsilon}')$.

If $G$ contains a clique of size $n^{1/2-{\epsilon}}$,
we saw in the proof of Theorem~\ref{highk} 
that $C(G)$ does not satisfy the RIP of order 
$k$ with parameter $n^{-\epsilon}/4$.
By Lemma~\ref{diag_rip}, the same is true of $C'(G)$.

Consider now the case where $G$ is drawn from the $G(n,1/2)$ distribution.
We saw in the proof of Theorem~\ref{highk} 
that for most $G$, $C(G)$ satisfies the RIP of order 
$k$ with parameter $\delta$.
As to $B_n$, note that $n \delta^2=n^{\epsilon^2-5\epsilon/2+1}$ 
and the exponent $\epsilon^2-5\epsilon/2+1 = (2-\epsilon)(1/2-\epsilon)$
is positive. Hence it follows from Theorem~\ref{randrip} that with probability approaching 1 as $n \rightarrow +\infty$, $B_n$ 
satisfies the RIP of order $k$ with parameter $\delta$.
We conclude from Lemma~\ref{diag_rip} that in this case,
$C'(G)$ satisfies the RIP of order $k$ with parameter $\delta$ for most~$G$.
\end{proof}
The constant 100 in Theorem~\ref{hard_rec2}
 can be replaced by any constant larger than~2.
Note also that the hypothetical polynomial-time algorithm in this theorem
remains deterministic: 
it is only the (hypothetical) algorithm for certifiying the absence 
of large cliques which
is randomized. 
It is clear, however, that Theorem~\ref{hard_rec2} 
can be adapted to  randomized approximation algorithms with one-sided error
(or even with two-sided error under a suitable adaptation of hypothesis 
$H'_{\epsilon}$).

The following table gives a summary of our hardness results.
They do not rule out the existence of a polynomial-time algorithm 
distinguishing between matrices with a {small}
RIP parameter and matrices with a RIP parameter larger than say 0.1.
Here {\em small} means as in Theorems~\ref{1/3th} to~\ref{hard_rec2} that the
RIP parameter goes to 0 as $n \rightarrow +\infty$. 
If convergence to 0 is not too fast then we could still
 use such a weak distinguishing algorithm for certifying most random matrices.
\newpage
\begin{table}[ht] 
\small
	\begin{center}
    \begin{tabular}{ | c | p{5cm} | c | p{3cm} | c |}
	\hline
    \multicolumn{5}{|c|}{} \\
	\multicolumn{5}{|c|}{\underline{\textbf{
Hardness Results}}} \\

    \multicolumn{5}{|c|}{} \\
	\hline\hline
	\multirow{2}{*}{} & & & & \\
	$k$  & \text{$(k,\delta_1)$ vs. $(k, \delta_2)$ - hard }  & \textbf{Result} & \textbf{Assumptions} & \textbf{Dimensions ($n\times N$)} \\
	\hline
	\hline
	\multirow{2}{*}{} & & & & \\ 
	     $n^{1/3}$ & $\delta_1 = n^{-1/4}$, $\delta_2 = n^{-1/6}/4$ 
 &  Theorem~\ref{1/3th} & $H_{1/6}$ & $n\times n$ \\
	\hline
	\multirow{2}{*}{} & & & & \\ 
	     $n^{0.6}$ & $\delta_1 = n^{-0.19}$, $\delta_2 =  n^{-1/6}/4$ &  Theorem~\ref{point6th} &   $H_{1/6}$ & $n\times n$ \\
	\hline
\multirow{2}{*}{} & & & & \\
    $n^{(1-2\epsilon)(1-\epsilon)}$ & $\delta_1 = n^{-5\epsilon/4+\epsilon^2/2}$, $\delta_2= n^{-\epsilon}/4$ &  Theorem~\ref{highk}  & $H_{\epsilon}$ & $n\times n $\\
\hline
\multirow{2}{*}{} & & & & \\
    $n^{\frac{1}{2}+\epsilon_0}$ & $\delta_1 = n^{-5\epsilon/4+\epsilon^2/2}$, $\delta_2 = n^{-\epsilon}/4$   &   Theorem~\ref{hard_rec1}  & $H_{\epsilon}$ & $2n\times (n+n^{1+\epsilon_0})$ \\
\hline
\multirow{2}{*}{} & & & & \\
   $n^{(1-2\epsilon)(1-\epsilon)}$ & $\delta_1 = n^{-5\epsilon/4+\epsilon^2/2}$, $\delta_2= n^{-\epsilon}/4$ &   Theorem~\ref{hard_rec2}  &$H'_{\epsilon}$ & 
$2n\times 100 n$ \\
\hline
\end{tabular}
\caption{We say that a matrix $\Phi$ has the $(k,\delta)$-RIP iff $(1-\delta) \leq \norm{\Phi x}^2 \leq (1+\delta)$ for every $k$-sparse unit vector $x$. By $(k,\delta_1)$ vs. $(k,\delta_2)$-hard we abbreviate the following: no polynomial time algorithm
can distinguish matrices $\Phi$ that satisfy the $(k,\delta_1)$-RIP 
from  matrices that do not satisfy the $(k,\delta_2)$-RIP. 
The absolute constant $\epsilon_0>0$ comes from~\cite{BDFKK,BDFKKb}.}
\end{center}
\end{table}


\begin{thebibliography}{10}

\bibitem{AAK07}
N.~Alon, A.~Andoni, T.~Kaufman, K.~Matulef, R.~Rubinfeld, and N.~Xie.
\newblock Testing $k$-wise and almost $k$-wise independence.
\newblock In {\em Proceedings of the Symposium on Theory of Computing (STOC)},
  pages 496--505, 2007.
\newblock available from \url{tau.ac.il/~nogaa/PDFS/aakmrx.pdf}.

\bibitem{AAMMW11}
N.~Alon, S.~Arora, R.~Manokaran, D.~Moshkovitz, and O.~Weinstein.
\newblock Inapproximability of densest $\kappa$-subgraph from average-case
  hardness.
\newblock
  \href{http://www.cs.princeton.edu/~rajsekar/papers/dks.pdf}{www.cs.princeton%
.edu/~rajsekar/papers/dks.pdf}, 2011.

\bibitem{AGHP92}
N.~Alon, O.~Goldreich, J.~Hastad, and R.~Peralta.
\newblock {Simple Construction of Almost $k$-wise Independent Random
  Variables}.
\newblock In {\em Proceedings of the Symposium on Foundations of Computer
  Science (FOCS)}, pages 544--553, 1990.

\bibitem{AKS98}
N.~Alon, M.~Krivelevich, and B.~Sudakov.
\newblock {Finding a Large Hidden Clique in a Random Graph}.
\newblock {\em Random Struct. Algorithms}, 13:457--466, 1998.

\bibitem{AKV02}
N.~Alon, M.~Krivelevich, and V.~Vu.
\newblock {On the Concentration of Eigenvalues of Random Symmetric Matrices}.
\newblock {\em Israel Journal of Mathematics}, 131:259--267, 2002.

\bibitem{BDMS12}
A.~Bandeira, E.~Dobriban, D.~Mixon, and W.~Sawin.
\newblock Certifying the restricted isometry property is hard.
\newblock \href{http://arxiv.org/abs/1204.1580}{arxiv.org/abs/1204.1580}, 2012.

\bibitem{BDFKK}
J.~Bourgain, S.~J. Dilworth, K.~Ford, S.~Konyagin, and D.~Kutzarova.
\newblock {Explicit Constructions of {RIP} Matrices and Related Problems}.
\newblock Available at~\href{http://arxiv.org/abs/1008.4535}{arxiv:1008.4535},
  August 2010.

\bibitem{BDFKKb}
J.~Bourgain, S.~J. Dilworth, K.~Ford, S.~Konyagin, and D.~Kutzarova.
\newblock {Breaking the $k^2$ Barrier for Explicit RIP Matrices}.
\newblock In {\em Proceedings of the Symposium on Theory of Computing (STOC)},
  2011.
\newblock Conference version of~\cite{BDFKK}.

\bibitem{Candes}
E.~J. Cand\`{e}s.
\newblock {The Restricted Isometry Property and its Implications for Compressed
  Sensing}.
\newblock {\em Comptes Rendus Mathematique}, 346(9-10):589 -- 592, 2008.

\bibitem{CRT06}
E.~J. Cand\`{e}s, J.~K. Romberg, and T.~Tao.
\newblock {Stable Signal Recovery from Incomplete and Inaccurate Measurements}.
\newblock {\em Communications on Pure and Applied Mathematics},
  59(8):1207--1223, 2006.

\bibitem{CandesTao}
E.~J. Cand\`{e}s and T.~Tao.
\newblock {Decoding by Linear Programming}.
\newblock {\em Information Theory, IEEE Transactions on}, 51(12):4203 -- 4215,
  2005.

\bibitem{ABG08}
A.~d'Aspremont, F.~Bach, and L.~El Ghaoui.
\newblock {Optimal Solutions for Sparse Principal Component Analysis}.
\newblock {\em J. Mach. Learn. Res.}, 9:1269--1294, 2008.

\bibitem{AspreG08}
A.~d'Aspremont and L.~El Ghaoui.
\newblock {Testing the Nullspace Property using Semidefinite Programming}.
\newblock {\em Mathematical Programming}, 127(1):123--144, 2011.
\newblock Available at~\href{http://arxiv.org/abs/0807.3520}{arxiv:0807.3520}.

\bibitem{Devore07}
R.~A. DeVore.
\newblock {Deterministic Constructions of Compressed Sensing Matrices}.
\newblock {\em J. Complex.}, 23:918--925, 2007.

\bibitem{FK81}
Z.~F\"{u}redi and J.~Koml\'{o}s.
\newblock {The Eigenvalues of Random Symmetric Matrices}.
\newblock {\em Combinatorica}, 1:233--241, 1981.

\bibitem{HK11}
E.~Hazan and R.~Krauthgamer.
\newblock How hard is it to approximate the best {Nash} equilibrium?
\newblock {\em SIAM J. Comput.}, 40:79--91, 2011.

\bibitem{IouNem08}
A.~Juditsky and A.~Nemirovski.
\newblock {On Verifiable Sufficient Conditions for Sparse Signal Recovery via
  $\ell_1$ Minimization}.
\newblock {\em Mathematical Programming}, 127(1):57--88, 2011.
\newblock Available at~\href{http://arxiv.org/abs/0809.2650}{arxiv:0809.2650}.

\bibitem{Kashin75}
B.~S. Kashin.
\newblock {The Diameters of Octahedra}.
\newblock {\em Uspekhi Mat. Nauk}, 30:251--252, 1975.

\bibitem{KZ11}
P.~Koiran and A.~Zouzias.
\newblock On the certification of the restricted isometry property.
\newblock \href{http://arxiv.org/abs/1103.4984}{arxiv.org/abs/1103.4984}, 2011.

\bibitem{PT12}
M.~Pfetsch and A.~Tillmann.
\newblock The computational complexity of the restricted isometry property, the
  nullspace property, and related concepts in compressed sensing.
\newblock \href{http://arxiv.org/abs/1205.2081}{arxiv.org/abs/1205.2081}, 2012.

\bibitem{Tao07}
T.~Tao.
\newblock Open question: deterministic {UUP} matrices.
\newblock
  \href{http://terrytao.wordpress.com/2007/07/02/open-question-deterministic-u%
up-matrices/}{terrytao.wordpress.com/2007/07/02/open-question-deterministic-uu%
p-matrices/}, 2007.

\bibitem{Vershynin}
R.~Vershynin.
\newblock {Introduction to the Non-asymptotic Analysis of Random Matrices}.
\newblock In: Compressed Sensing: Theory and Applications, eds. Y.~Eldar and
  G.~Kutyniok, pages 210-268. Cambridge University Press, 2012. Available
  at~\href{http://www-personal.umich.edu/~romanv/papers/non-asymptotic-rmt-pla%
in.pdf}{http://www-personal.umich.edu/~romanv/papers/non-asymptotic-rmt-plain.%
pdf}.

\end{thebibliography}


\section*{Appendix: Refuting $H_{\epsilon}$ for  negative $\epsilon$}

Set $k=n^{\alpha}$ where $\alpha>1/2$. In this section, we describe an algorithm
which:
\begin{itemize}
\item[(i)] always outputs ``yes'' 
if $G$ contains a clique of size $k$.

\item[(ii)] Outputs ``no clique'' 
on most graphs $G$ when $G$ is drawn from the uniform
distribution $G(n,1/2)$.
\end{itemize}
The algorithm is as follows.
\begin{enumerate}
\item Let $G$ be the input graph and $A$ its signed adjacency matrix.
Compute $\lambda_1(A)$, the largest eigenvalue of $A$.

\item Output ``yes'' if $\lambda_1(A) \geq k-1$. Otherwise, output ``no clique''.
\end{enumerate}
If $G$ contains a clique of size $k$, Lemma~\ref{expansion} shows that 
$\lambda_1(A) \geq k-1$ since $\lambda_1(A) = \sup_{||x||=1} x^TAx$
for any symmetric matrix. This algorithm therefore satisfies condition~(i).
On the other hand, for most $G$ the largest eigenvalue of $A$ is of order
$2\sqrt{n}$ by Theorem~2 in~\cite{FK81}. Since $\alpha > 1/2$, 
it follows that most $G$ satisfy the inequality $\lambda_1(A) < k-1$ 
and condition~(ii) is satisfied as well.
\end{document}